\newcommand{\mb}[1]{{\mathbf{#1}}}
\newcommand{\comment}[1]{}
\newcommand{\Ex}{{\mathbb{E}}}
\DeclareMathOperator\Ind{{{\mathbb I}nd}}
\DeclareMathOperator\supp{supp}
\newcommand{\Prob}{{\mathbb{P}}}
\newcommand{\cl}[1]{\langle #1 \rangle}
\renewcommand{\cite}[1]{[#1]}
\def\beginrefs{\begin{list}%
        {[\arabic{equation}]}{\usecounter{equation}
         \setlength{\leftmargin}{2.0truecm}\setlength{\labelsep}{0.4truecm}%
         \setlength{\labelwidth}{1.6truecm}}}
\def\endrefs{\end{list}}
\newtheorem{theorem}{Theorem}
\newtheorem{claim}[theorem]{Claim}
\newtheorem{corollary}[theorem]{Corollary}
\newtheorem{definition}[theorem]{Definition}
\newenvironment{proof}{{\bf Proof:}}{\hfill\rule{2mm}{2mm}}
\title{A computational method for bounding the probability of reconstruction on trees}
\author{Nayantara Bhatnagar\thanks{Department of Statistics,
    University of California at Berkeley. ({\tt
      nayan@stat.berkeley.edu}) Supported by DOD ONR
grant N0014-07-1-05-06 and DMS 0528488.} \and 
Elitza Maneva\thanks{Universitat Polit\`ecnica de Catalunya
({\tt maneva@lsi.upc.edu}); 
work done in part while this author was a postdoctoral researcher at IBM Almaden. }}
\begin{document}

\maketitle

\begin{abstract}
For a tree Markov random field non-reconstruction is said to hold if
as the depth of the tree goes to infinity the information that a
typical configuration at the leaves gives about the value at the root
goes to zero. The distribution of the measure at the root
conditioned on a typical boundary can be computed using a
distributional recurrence.
However the exact computation is not feasible because the support of
the distribution grows exponentially with the depth. 

In this work, we introduce a notion of a {\em survey} of a
distribution over probability vectors 
which is a succinct representation of the true
distribution. We show that a survey of the distribution of the measure
at the root can be
constructed by an efficient recursive algorithm. The key properties of
surveys are that the size does not grow with the depth, they
can be constructed recursively, and they still provide a good bound
for the distance between the true conditional distribution and the
unconditional distribution at the root. This approach applies to a
large class of Markov random field models including randomly generated
ones.  
As an application we show bounds on the
reconstruction threshold for the Potts model on small-degree trees.

\end{abstract}

%\begin{keywords}%\end{keywords}
%\begin{AMS}%03D15, 68Q15, 68Q17, 68Q25, 05C40%\end{AMS}
%\pagestyle{myheadings} \thispagestyle{plain} \markboth{N. BHATNAGAR, E. MANEVA}
%{COMPUTATIONAL METHOD FOR BOUNDING RECONSTRUCTION}

\newpage

\section{Introduction}

Correlations between distant elements or sets of elements in randomly
generated Markov random fields (MRFs) are a main consideration in the
analysis of random constraint satisfaction problems, and in the design
and analysis of message-passing, local search, and other iterative
algorithms for these problems. Here we study one such concept of
correlation on tree MRFs. The presence of this correlation is known as
the property of {\it reconstruction}, and its absence as {\it
  non-reconstruction}. Non-reconstruction is equivalent to the
free-boundary Gibbs measure of the tree being {\it extremal}
\cite{Geobook}. From the 
point of view of statistical physics, reconstruction is equivalent to
{\it replica-symmetry breaking} \cite{MM06}. Some important results on
reconstruction are \cite{KS66, EKPS00, Mos01, MP03, BCMR06, BVVW08, Sly08,
Sly09}. For the connection between this concept and the design of algorithms
for constraint satisfaction and optimization problems we refer the reader to
\cite{MPZ02, ZK07, MSW07} and the references therein. 

\comment{
Our contribution is a new general computational method for
proving that a tree Markov random field has the non-reconstruction
property. We illustrate the method with an application to the Potts
model.   
}
Our contribution is the first general efficient computational method to
obtain non-trivial bounds for 
the probability of reconstruction for a tree Markov random field. 
We illustrate the method with an application to the Potts model.   

Consider a tree MRF.      
The distribution of interest is
the conditional distribution at the root, given a boundary that is
generated randomly according to the MRF. It is said that the root {\it
  cannot be reconstructed}, or the non-reconstruction property holds,
if and only if with high probability this distribution converges to
the unconditional marginal distribution at the root as the depth of
the tree goes to infinity. The distribution of the conditional
distribution at the root (conditioned on a random boundary) can be
expressed recursively\footnote{Notice that the ``distribution of
  ... distribution'' is not a mistake - the object we are interested
  in is indeed the distribution of the randomness left in the root
  after looking at the boundary.} - there is a simple analytic
expression for the distribution for a tree of depth $n+1$ in terms of
the distribution for the tree of depth $n$. However as the depth of
the tree increases the support of the distribution grows
exponentially, which makes numerical estimates difficult to
obtain. For analytical analysis one essentially  needs to find a
special parameter of the distribution which can be bounded
recursively, and this depends on the particular %channel. 
MRF.
Often the
analysis has two steps - first showing that the expected distance from
the unconditional distribution is below some small constant, and then
showing that arriving below this small constant implies that this
distance (or some parameter related to it) decreases
geometrically. Particular examples of this approach are the recent 
results on reconstruction for colorings of 
Bhatnagar, Vera, Vigoda, and Weitz \cite{BVVW08} and Sly \cite{Sly08}, 
and for the q-state Potts model of Sly \cite{Sly09}.
The moral from their analysis is that the first step is
the more difficult to achieve. In particular, it is the step that makes 
the analysis possible 
only for the case of large-degree trees. Here we propose a very general method
for making this first step, independent of the particular %channel, 
method for generating the MRF, and which is practical for the case of small-degree trees. 

We introduce the notion of a {\it survey} of a distribution over
probability vectors. The survey can be thought of as a "projection"
onto a small "basis" set of probability vectors that carries enough
information about the true distribution. In particular, for the
reconstruction problem, when the goal is to bound from above the
probability of reconstruction, we show that it suffices to keep a
survey of the distribution at each iteration. Applying the recursion
to the surveys for enough iterations allows us to obtain very good
(possibly arbitrarily good) bounds on the probability of
reconstruction.  

We apply our method to the symmetric 3-state Potts model with various parameters in
order to compare with previous results, although it will be clear that
the method does not depend on the symmetry of the model. For the
Potts model the second step is also easy to achieve using the ideas
from recent work of Sly \cite{Sly09}. Since the complexity of the
method is exponential in the degree of the tree, and in the alphabet
of the % channel
MRF, we do not apply it to very large parameters. However, we
are able to demonstrate new bounds for 3-spins on the 2-ary and 3-ary tree, 
and on a random tree in which every internal vertex has either 2 or 3 
children with equal probability, improving the
bounds of Mossel and Peres \cite{MP03}. We should point out that very recently Formentin and K\"ulske \cite{FK} demonstrated
even better results for this model. Thus, rather than the numerical
results, the importance of our contribution is in the generality of
our method. 

The algorithm we present can be viewed as a rigorous alternative to
the population dynamics algorithm used in physics \cite{MM06, ZK07} 
to determine the spin-glass transition. In population
dynamics the distributional recursion is approximated by keeping a
sample of the distribution at each step. What is not known rigorously
is whether  applying the recursion to a sample of the distribution for
the tree of depth $n$ really results in some sense in a ``faithful''
sample of the distribution for the tree of depth $n+1$. In contrast,
the main technical lemma of our work is precisely the statement that
applying the recursion to the survey of the distribution for the tree
of depth $n$ results in a survey of the distribution for the tree of
depth $n+1$.  

Another related algorithm is the density evolution algorithm for
analysis of the probability of bit-error of LDPC codes
\cite{RU01, RUbook}. The density evolution algorithm, as its name indicates,
is also a recursion on distributions, and in practice it is carried
out by heuristically quantizing the distribution at every step by
rounding. Unfortunately, unlike the reconstruction recursion, the
density evolution recursion does not commute with taking surveys of
the distributions therefore our method cannot be applied, or at least
not in the obvious way. 

Quantization of distributions is also an important design step in the
Survey Propagation algorithm of \cite{MPZ02}. In its analytical form
Survey Propagation uses messages that are distributions with growing
support, whereas in its practical form the messages are distributions
with support of size 3. However we do not know if there is a precise
mathematical connection between the two kinds of quantization. 

This article is organized as follows. In the next section we give the
technical definitions related to reconstruction, and the recursion
relation that we analyze. Section \ref{sec:surveys} is dedicated to
the new definition of a survey of a distributions, its key properties,
and how it can be applied to the reconstruction problem. In the last section we discuss
an application to the Potts model, and compare the results obtained by
our method to previous bounds. 

%\section{Motivation}

%TO DO: Talk about Survey Propagation, deterministic counting algorithms, 

\section{Reconstruction on trees}

The reconstruction problem in its simplest form can be stated in terms
of a broadcast problem.
Consider a process in which information is broadcast from the root of
an infinite rooted tree $T$ to other vertices as follows. Each edge $e = (u,v)$
acts as a channel $M$ with a finite alphabet $D=\{1, 2, \dots, q\}$. The
channel $M$ is a Markov chain where $(M)_{i,j} = \Pr(v=j|u=i)$.

The letter at the root $\rho$, denoted by $\sigma_\rho$,  is chosen
according to some initial   
distribution. This value is then propagated in the tree as 
follows. For vertex $v$ with parent $u$, let $\sigma_v =
M(\sigma_u)$ for each edge independently. 

For distributions $\mu$ and $\nu$ on the same space $\Omega$, the {\em
  total variation distance} between $\mu$ and $\nu$ is
\begin{eqnarray*}
d_{TV}(\mu,\nu)  = \frac{1}{2} \displaystyle\sum_{\sigma \in \Omega}
|\mu(\sigma) - \nu(\sigma)|
\end{eqnarray*}

Let $L(n)$ denote the configuration at level $n$.
\begin{definition}The tree $T$ and channel $M$ have the {\em
    non-reconstruction property} if for every $a,b \in D$,
\begin{eqnarray*}
\lim_{n \rightarrow \infty} d_{TV} [L^a(n),L^b(n)] = 0
\end{eqnarray*}
where $L^c(n)$ denotes the conditional distribution of $L(n)$ given
that $\sigma_\rho = c$.
\end{definition}

In 2006 M\'ezard and Montanari \cite{MM06} showed that the same problem
had also been studied in physics for even more general models, and the
reconstruction transition is the equivalent to the {\em replica symmetry
breaking spin-glass transition}. In the physics formalism rather than
using channels, the system is described as a Markov random field,
which in many application is also randomly generated, and the question
again is whether the root is reconstructible from typical
assignments at the leafs.  

Here, for greatest generality we will define the reconstruction
problem using a Markov random field (MRF). This way we capture models
such as 3-SAT and other constraint satisfaction problems, which are
not as easy to think about in terms of channels.   

Consider a tree $T=(V, E)$ with root vertex $\rho$. For every $v \in
V$ there is a domain of values $D_v$ that can be assigned to this
vertex, and for every edge $e=(u,v)$ a non-negative function we call
potential $\Psi_e:D_u \times D_v \rightarrow R^+$. 
We will define a distribution over assignments to the vertices
$\times_{v\in V} D_v=D$. For every configuration $\overline{\sigma} \in D$,
let $\sigma_v$ denote the component corresponding to vertex  
$v$, and $\overline{\sigma}(n)$ the components corresponding to level $n$
of the tree. 
\begin{equation}
\label{eq:MRF}
\Prob_{T, \Psi}(\overline{\sigma}):= \frac{1}{Z_{T, \Psi}}
\prod_{e=(u,v)\in E} \Psi_e(\sigma_u, \sigma_v), 
\end{equation}
where $Z$ is the normalizing constant $Z_{T, \Psi}=\sum_{\overline{\sigma}
  \in D} \prod_{e=(u,v) \in E} \Psi_e(\sigma_u, \sigma_v)$.  

We will allow random MRFs $(T, \Psi)$ generated in the following
way. To each level there corresponds a degree distribution and a
distribution over potential functions. A tree of depth $1$ is just a
single vertex (no potential functions). An MRF of depth $n+1$ is
generated by choosing a degree for the root from the degree
distribution of level $n+1$, next choosing potential functions for all the
edges adjacent to the root independently from the distribution of
potential functions corresponding to level $n+1$, and finally
attaching randomly generated MRFs of depth $n$ to the other ends of
the edges. 

We denote by $L(n)$ the configuration at the vertices at level $n$
(assuming $T$ is of depth at least $n$), and by $\rho$ the root. Let
also $L^a(n)$ denote the configuration at level $n$ conditional on the
root having value $a\in D_\rho$. 

\begin{definition} We say that a random MRF 
{\em has the non-reconstruction property} if for every  $a, b \in D_\rho$  
$$ \lim_{n\rightarrow \infty} \Ex_{T, \Psi}[d_{TV}(L^a(n), L^b(n))]=0.$$
\end{definition}

For the rest of this section we will consider a fixed MRF, so we can
omit $\Psi$ from the subscript as the functions on the edges will not
be changing, and we will also omit $T$ from the subscript whenever it
is understood. We will also use $a$ and $b$ for the events that the
root takes the value $a$ or $b$ respectively.  

We will use an alternative expression for the total variation distance,
which follows from Bayes' rule. For this we denote $\pi(a) :=
\Prob_T(a)$ and $\pi(b):=\Prob_T(b)$.  
\begin{eqnarray}
%\label{eq:dist}
d_{TV}(L^a(n), L^b(n))&=&  \sum_L | \Prob(L(n)=L | a) - \Prob(L(n)=L | b) |\\
&=& \sum_L \left| \frac{\Prob(a |L(n)=L) \Prob(L(n)=L)}{\pi(a)} -
  \frac{\Prob(b | L(n)=L)\Prob(L(n)=L)}{\pi(b)}\right|\\ 
%&=& \sum_L \Prob(L(n)=L) \left| \frac{\Prob(a|L(n)=L)}{\pi(a)} -
%\frac{\Prob(b | L(n)=L)}{\pi(b)}\right|\\ 
&=&\Ex_{L(n)}\left[\left|\frac{\Prob(a|L(n))}{\pi(a)} -
    \frac{\Prob(b|L(n))}{\pi(b)}\right|\right]\\ 
&\le&\frac{1}{\pi(a)}\Ex_{L(n)}\left[\left| \Prob(a|L(n))-\pi(a)\right|\right] +
\frac{1}{\pi(b)}\Ex_{L(n)}\left[\left| \Prob(b|L(n))-\pi(b)\right|\right].
\label{eq:ineq}
\end{eqnarray}

An immediate observation is that this quantity is monotonically decreasing with the depth of the tree $n$.
\begin{eqnarray*}
&&\Ex_{L(n+1)}\left[\left|\frac{\Prob(a|L(n+1))}{\pi(a)} -
    \frac{\Prob(b|L(n+1))}{\pi(b)}\right|\right] 
\\
&&=\Ex_{L(n+1)}\left[\left|\sum_{L} 
\frac{\Prob(a|L(n)=L) ~\Prob(L(n)=L|L(n+1))}{\pi(a)} - 
\frac{\Prob(b|L(n)=L) ~\Prob(L(n)=L|L(n+1))}{\pi(b)}
\right|\right]\\
&&\le
\Ex_{L(n+1)}\left[\sum_{L} \Prob(L(n)=L | L(n+1)) \times 
\left|
\frac{\Prob(a|L(n)=L)}{\pi(a)} - \frac{\Prob(b |L(n)=L)}{\pi(b)}
\right|\right]\\
&&=
\sum_{L'} \Prob(L(n+1)=L') \sum_{L} \Prob(L(n)=L | L(n+1)=L') \times 
\left|
\frac{\Prob(a|L(n)=L)}{\pi(a)}- \frac{\Prob(b|L(n)=L)}{\pi(b)}\right|\\
&&=
\sum_L \Prob(L(n)=L)
\left|
\frac{\Prob(a|L(n)=L)}{\pi(a)}- \frac{\Prob(b |L(n)=L)}{\pi(b)}\right|\\
&&= \Ex_{L(n)}\left[\left|\frac{\Prob(a|L(n))}{\pi(a)} - \frac{\Prob(b|L(n))}{\pi(b)}\right|\right].
\end{eqnarray*}

This expression for the variation distance is a function of a
distribution that we will refer to many times in the rest of the
paper, so for convenience we define the following terminology for it:  

\begin{definition} For a tree Markov random model as in equation
  \eqref{eq:MRF} the {\em residual distribution at the root of T} is
  the distribution of the marginal distribution at the root
  conditional on a random boundary, which is chosen according to the
  distribution $\Prob_{T,\Psi}$. In other words it is the distribution
  of the probability vector $\eta=(\eta^a:=\Prob_T(a | L(n)), a \in
  D_\rho)$, where $L(n)$ is a configuration for the leaves chosen
  according to $\Prob_{T, \Psi}$. 
\end{definition}

Intuitively, a random MRF has the non-reconstruction property if the residual distribution at the root is with high probability concentrated on probability vectors arbitrarily close to the stationary distribution $\pi$. In particular, we will aim to show that 
$\Ex[| \Prob(a | L(n)) -\pi(a) |]$ goes to 0 for every $a\in D_\rho$, which implies non-reconstruction by the inequality \eqref{eq:ineq}.  

Next we derive the recursive equation for the residual distribution at the root of a tree.
For a tree $T$ of depth $n$ with edges $E_T$ let 
$$Z_T(\overline{\sigma}):= \prod_{e=(u,v) \in E_T} \Psi_e(\sigma_u, \sigma_v).$$ 
Thus $Z_T:=\sum_{\overline{\sigma}}Z_T(\overline{\sigma})$. For a boundary configuration $L$, let 
$$Z_T(L):=\sum_{\overline{\sigma}: \overline{\sigma}(n)=L} Z_T(\overline{\sigma}).$$ 
For a boundary $L$ let $\eta_T(L)=(\eta^a_T(L): a\in D_\rho)$ denote the probability vector for the distribution of the assignment at the root conditional on the boundary being $L$, i.e. $\eta^a_T(L):=\Prob_T(a|L(n)=L)$. For any probability vector $\eta$ on $D_\rho$ let 
$$Z_T(\eta):= \sum_{L:\eta_T(L)=\eta} Z_T(L).$$ 
Thus the probability that the marginal distribution at the root 
is equal to $\eta$ is $Z_T(\eta)/Z_T$. 

Suppose the children of the root of T are $v_1, \dots, v_r$, the corresponding edges connecting them to the root are $e_1, 
\dots, e_r$ and the subtrees rooted at them are respectively $T_1, \dots, T_r$. Consider a boundary configuration  $L=(L_1, \dots, L_r)$ for the large tree $T$, where $L_i$ is the part of the boundary that belongs to $T_i$. It is straightforward to derive the expression for $\eta_T(L)$ in terms of $\eta_{T_i}(L_i)$:
\begin{eqnarray*}
\eta_T^a(L) &=& 
\frac{  \sum_{\overline{\sigma}: \overline{\sigma}(n)=L, \sigma_\rho=a}  Z(\overline{\sigma}) }
{Z_T(L)}\\
&=& 
\frac{\sum_{\sigma_{v_1}, \dots, \sigma_{v_r}} 
\prod_{i=1}^r  \Psi_{e_i}(a, \sigma_{v_i}) ~ Z_{T_i}(L_i) \eta^{\sigma_{v_i}}_{T_i}(L_i) }
{Z_T(L)} \\
&=& 
\frac{\prod_{i=1}^r Z_{T_i}(L_i)}{Z_T(L)} 
\prod_{i=1}^r ~\sum_{\sigma_{v_i \in D_{v_i}}} 
\Psi_{e_i}(a, \sigma_{v_i}) ~\eta^{\sigma_{v_i}}_{T_i}(L_i) 
\end{eqnarray*}

It is convenient to define the corresponding function (which is actually the update function of the belief propagation algorithm for computing the marginal distribution at the root of an acyclic MRF). Let ${\cal P}_v$ be the space of probabilities over the domain $D_{v}$ for any $v \in V$. We define $f: {\cal P}_{v_1}, \dots, {\cal P}_{v_r} \rightarrow {\cal P}_\rho$ in the following way:
$$ f^a (\eta_1, \eta_2, \dots, \eta_r):= 
\prod_{i=1}^r ~~\sum_{\sigma_{v_i} \in D_{v_i}}
 \eta_i^{\sigma_{v_i}}  \Psi_{e_i}( a, \sigma_{v_i}).$$
We will also need to use the norm $\| f(\eta_1, \dots, \eta_r)\| = \sum_{a \in D_\rho} f^a(\eta_1, \dots, \eta_r).$ With this notation
$$\eta_T^a(L) = \frac{f^a( \eta_{T_1}(L_1), \dots, \eta_{T_r}(L_r))}
{\|f(\eta_{T_1}(L_1), \dots, \eta_{T_r}(L_r))\|}.$$
For two vectors  the symmetric relation $\propto$ will indicate that the vectors are equal up to a multiplicative constant. Thus $\eta_T(L)\propto f( \eta_{T_1}(L_1), \dots, \eta_{T_r}(L_r))$. When used with probabilities it will indicate that the normalization constant has been omitted. We are now ready to derive the recursion for the residual distribution at the root of the tree.

\begin{theorem} 
\label{thm:recon}
Let $\mb{P_i}$ and $\mb{Q}$ be random vectors such that $\mb{P_i}$ is 
distributed according to the residual distribution at the root of $T_i$ for $i=1, \dots, r$ and $\mb{Q}$ is distributed according to the residual distribution at the root of $T$. Then for any $\eta \in {\cal P}_\rho$
$$\Prob(\mb{Q}=\eta) \propto \Ex[\|f(\mb{P_1}, \dots, \mb{P_r})\| \times \Ind[f(\mb{P_1}, \dots, \mb{P_r})\propto \eta]].$$

\end{theorem}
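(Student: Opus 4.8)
The plan is to reduce the claim to the two partition-function identities established in the lines immediately preceding the theorem, and then to recognize the resulting sum over boundary configurations as an expectation over independent copies of the $\mb{P_i}$.

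First I would record, for an arbitrary boundary $L=(L_1,\dots,L_r)$ of $T$, the two facts derived above: that $\eta_T(L)\propto f(\eta_{T_1}(L_1),\dots,\eta_{T_r}(L_r))$, and — by summing the displayed identity for $\eta_T^a(L)$ over $a\in D_\rho$ — that $Z_T(L)=\big(\prod_{i=1}^r Z_{T_i}(L_i)\big)\,\|f(\eta_{T_1}(L_1),\dots,\eta_{T_r}(L_r))\|$. Since $\eta_T(L)$ is by definition the normalization of $f(\eta_{T_1}(L_1),\dots,\eta_{T_r}(L_r))$ to a probability vector and $\eta$ is itself a probability vector, the event $\eta_T(L)=\eta$ coincides with the event $f(\eta_{T_1}(L_1),\dots,\eta_{T_r}(L_r))\propto\eta$.

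Next I would expand $Z_T(\eta)=\sum_{L:\eta_T(L)=\eta}Z_T(L)$, split $L$ into its pieces $(L_1,\dots,L_r)$, substitute the two facts above, and regroup the sum according to the vectors $\eta_i:=\eta_{T_i}(L_i)$. Because both $\|f(\eta_1,\dots,\eta_r)\|$ and the indicator depend on the $L_i$ only through the $\eta_i$, the inner sum factorizes as $\prod_i\sum_{L_i:\eta_{T_i}(L_i)=\eta_i}Z_{T_i}(L_i)=\prod_i Z_{T_i}(\eta_i)$, which yields
$$Z_T(\eta)=\sum_{\eta_1,\dots,\eta_r}\Big(\prod_{i=1}^r Z_{T_i}(\eta_i)\Big)\,\|f(\eta_1,\dots,\eta_r)\|\;\Ind[f(\eta_1,\dots,\eta_r)\propto\eta].$$
Dividing by $\prod_i Z_{T_i}$, and using $\Prob(\mb{P_i}=\eta_i)=Z_{T_i}(\eta_i)/Z_{T_i}$ together with the independence of $\mb{P_1},\dots,\mb{P_r}$, the right-hand side is exactly $\Ex\big[\|f(\mb{P_1},\dots,\mb{P_r})\|\,\Ind[f(\mb{P_1},\dots,\mb{P_r})\propto\eta]\big]$. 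Since $\Prob(\mb{Q}=\eta)=Z_T(\eta)/Z_T$, this gives $\Prob(\mb{Q}=\eta)=\frac{\prod_i Z_{T_i}}{Z_T}\,\Ex\big[\|f(\mb{P_1},\dots,\mb{P_r})\|\,\Ind[f(\mb{P_1},\dots,\mb{P_r})\propto\eta]\big]$, and since the prefactor $\prod_i Z_{T_i}/Z_T$ does not depend on $\eta$, this is the asserted proportionality; one can double-check by summing over $\eta$ that the constant equals $1/\Ex[\|f(\mb{P_1},\dots,\mb{P_r})\|]$.

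This is essentially a bookkeeping argument, so I do not expect a deep obstacle; the one step that needs care is the regrouping — confirming that summing over boundaries $L_i$ and then over the induced vectors $\eta_i$ is legitimate and that the weights factorize cleanly across subtrees — together with the (implicit) hypothesis that $\mb{P_1},\dots,\mb{P_r}$ are independent, which is precisely what lets the factorized sum be read as a genuine expectation. It is worth noting in passing that under the ``natural'' coupling in which $L$ is a single draw from $\Prob_{T,\Psi}$ the restrictions $L_i$ are \emph{not} independent (their joint weight carries the extra factor $\|f(\eta_{T_1}(L_1),\dots,\eta_{T_r}(L_r))\|$), so the $\mb{P_i}$ in the statement must be taken to be independent copies.
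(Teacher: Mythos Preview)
Your proposal is correct and follows essentially the same route as the paper's proof: both first establish $Z_T(L)=\big(\prod_i Z_{T_i}(L_i)\big)\|f(\eta_{T_1}(L_1),\dots,\eta_{T_r}(L_r))\|$, then regroup the sum defining $Z_T(\eta)$ according to the induced vectors $\eta_i$, and finally divide by the partition functions to read off the expectation. Your added remark that the $\mb{P_i}$ must be taken as independent copies (rather than the restrictions of a single draw of $L$) is a clarification the paper leaves implicit.
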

\begin{proof}
First we derive the recursion for the total weight of configurations with a fixed boundary:
\begin{eqnarray*}
Z_T(L)&=& 
\sum_{a \in D_\rho} \sum_{\sigma_{v_1}, \dots, \sigma_{v_r}} 
\prod_{i=1}^r  \Psi_{e_i}(a, \sigma_{v_i}) ~ Z_{T_i}(L_i) \eta^{\sigma_{v_i}}_{T_i}(L_i) \\
&=& 
\left(\prod_{i=1}^r Z_{T_i}(L_i)\right) \times \sum_{a \in D_\rho} ~
\prod_{i=1}^r\sum_{\sigma_{v_i} \in D_{v_i}} \eta^{\sigma_{v_i}}_{T_i}(L_i) ~\Psi_{e_i}(\sigma_\rho, \sigma_{v_i})\\
&=& 
\left(\prod_{i=1}^r Z_{T_i}(L_i)\right) \times \| f(\eta_{T_1}(L_1), \dots, \eta_{T_r}(L_r))\|.
\end{eqnarray*}
Next, using the above, we derive the recursion for the total weight of configurations yielding a given 
marginal distribution at the root:
\begin{eqnarray*}
Z_T(\eta)
&=& 
\sum_{L: \eta_T(L)=\eta} Z_T(L)\\
&=&
\sum_{L=(L_1, \dots, L_r): \atop f(\eta_{T_1}(L_1), \dots, \eta_{T_r}(L_r)) \propto \eta} 
\left(\prod_{i=1}^r Z_{T_i}(L_i)\right) \times \| f(\eta_{T_1}(L_1), \dots, \eta_{T_r}(L_r))\|\\
&=&
\sum_{\eta_1, \dots, \eta_r: \atop  f(\eta_1, \dots, \eta_r)\propto \eta} 
\|f(\eta_1, \dots, \eta_r)\| \sum_{L_1: \eta_{T_1}(L_1)=\eta_1} \dots \sum_{L_r: \eta_{T_r}(L_r)=\eta_r} 
\prod_{i=1}^r   Z_{T_i}(L_i)\\
&=&  
\sum_{\eta_1, \dots, \eta_r:  \atop f(\eta_1, \dots, \eta_r)\propto \eta} 
\|f(\eta_1, \dots, \eta_r)\| \prod_{i=1}^r Z_{T_i}(\eta_i).
\end{eqnarray*}
Finally we can derive the recursion for the residual distribution at the root:
\begin{eqnarray*}
\Prob(\mb{Q}=\eta)&=&\frac{Z_T(\eta)}{Z_T} \\
&=& 
\frac{\prod_{i=1}^r Z_{T_i}}{Z_T}
\sum_{\eta_1, \dots, \eta_r: \atop f(\eta_1, \dots, \eta_r)\propto \eta} 
\|f(\eta_1, \dots, \eta_r)\| \prod_{i=1}^r \frac{Z_{T_i}(\eta_i)}{Z_{T_i}}\\
&=&
\frac{\prod_{i=1}^r Z_{T_i}}{Z_T} ~~
\Ex[ \|f(\mb{P_1}, \dots, \mb{P_r})\| \times \Ind[f(\mb{P_1}, \dots, \mb{P_r})\propto \eta]]. 
\end{eqnarray*}
\end{proof}

%\begin{remark} 
This recursion was also derived by M\'ezard and Montanari in \cite{MM06} and, as pointed out by them, its fixed point is known as the ``1-step Replica Symmetric Breaking solution with Parisi parameter $m=1$'' (in the general 1RSB scheme the factor $\|f(\dots)\|$ is raised to a power $m$). 
Almost all other work in reconstruction considers, instead, the recursion for the {\em conditional} residual distribution at the root, conditioned on the boundary being generated from the MRF with a fixed value at the root. The method we present here can be applied only with the unconditional distribution. 
%\end{remark}

The main contribution of this article is a method for discretizing the recursion of Theorem \ref{thm:recon}. The only property of $f$ that is used is that it is a multi-affine function (i.e. affine in each coordinate). Thus the main technical theorem will not use the definitions related to reconstruction.

\section{Surveys of distributions}
\label{sec:surveys}

Let us first illustrate our approach with the example of the channel
corresponding to random coloring on the  regular $d$-ary tree. Each vertex can take one of $q$
colors. Based on the color of the parent, the colors of the children
are chosen independently and uniformly at random from the set of
colors different from that of the parent. If we start with a uniformly
random color at the root, this process generates a random coloring of
the tree. Showing non-reconstruction for this process is equivalent to
showing that for almost all colorings of the leafs generated by
choosing a random coloring of the tree, there are almost the same
number of colorings consistent with these leafs, in which the root of
the tree has each of the $q$ colors.  More precisely, let $C_{L}$
denote the colorings of a tree of depth $n$ with leafs colored
according to $L$, and $C^i_{L}$ denote the colorings in $C_L$
with color $i$ at the root. We will recursively try to match up the
colorings in the sets $C^1_L, ..., C^q_L$ by splitting them into sets
that are balanced or close to balanced. For example, suppose $q=3$,
the colors are called Red, Blue and Green, and for some $L$ we have
$(C^R_L, C^B_L, C^G_L)=(10,5,7)$.  Then we can split this set of
colorings into 3 sets : one that is perfectly balanced, containing 5
colorings of each type; one that is balanced with respect to Red and
Green, containing two colorings with Red at the root, and two with
Green at the root; and one that contains the remaining 3 colorings
with Red at the root ($(10,5,7)=(5,5,5)+(2,0,2)+(3,0,0)$). We call
such sets \emph{bundles}. A bundle always contains only colorings that
have the same colors at the leafs. Bundles can be of several different
types according to the ratio of number of colorings with red, blue and green at the root,  
and we can choose the types that will be allowed. 
We will only keep track of the number of
bundles of each type.  The goal is to construct the bundles
recursively in such a way that the majority of colorings are eventually  in
balanced bundles.

There is a simple process to construct bundles recursively. Suppose we
have a particular splitting of the colorings of the $d$-ary tree of depth $n$
into bundles. For any $d$-ple of these bundles, consider the set
of colorings of the tree of depth $n+1$ such that the first subtree of
depth $n$ is colored with a coloring from the first bundle, the second
subtree with a coloring from the second bundle, etc. The resulting set
of colorings on the depth $n+1$ tree, has the following properties:
(1) all colorings have the same colors at the leafs, and (2) the
number (or fraction) of colorings in this set with a specific color at the root can
be computed exactly using only the types of the bundles in the
$d$-ple. The resulting set of colorings on the depth $n+1$ tree
can be split again into bundles of the allowed types. The specific
recipe for splitting this set into bundles of course will influence to
what extent the balanced and near-balanced bundles dominate the entire
collection of bundles.

In the special case that the bundles are defined to be the set of colorings  with a given boundary, i.e. all types of bundles are allowed and no splitting of bundles occurs, this gives exact recursive computation of the distribution of the marginal distribution at the root.

In this next section we formalize and generalize the above construction.

\subsection{Definitions}

%Let's denote by
%${\cal P}$ the set of all probability distributions over a domain $\Omega$. In
%other words ${\cal P}=\{(p_1, \dots, p_{|\Omega|}) \in [0,1]^n:
%p_1+p_2+\dots+p_{|\Omega|}=1\}$. 
%For an element $P=(p_1, \dots, p_q)\in {\cal P}$ and $\sigma\in \Omega$ we will denote by 
%$P(\sigma)$ the value $p_\sigma$.

Let ${\cal V}$ denote a real vector space of finite dimension.
Let ${\cal S} = (S_1,\dots, S_n)$ be a finite sequence with $S_i \in {\cal V}$. 
%Let ${\cal S}$ be an ordered set $(S_1, \dots, S_n)$ with $S_i \in {\cal V}$. 
We denote the convex hull of ${\cal S}$ in ${\cal V}$ by
$\cl{\cal S}$. Let $\alpha_1, \dots, \alpha_n$ be functions from the
convex hull $\cl{\cal S}$ to $[0,1]$ such that the following
properties hold for every $\eta \in \cl{\cal S}$:
\begin{enumerate}
\item $\sum_{i=1}^n \alpha_i(\eta)=1$, 
\item $\eta=\sum_{i=1}^n \alpha_i(\eta)~S_i$.
\end{enumerate}
Thus for every $\eta\in \cl{\cal S}$ these functions define a convex
decomposition of $\eta$. We will
call the tuple $({\cal S}, \alpha_1, \dots, \alpha_n)$ a
\emph{skeleton} in ${\cal V}$, and ${\cal S}$ the \emph{base set} of
the skeleton.

By $\mb{P}$ we denote a distribution on ${\cal V}$ with finite support
as well as a random vector chosen from this distribution. 
%By $\mb{P}$ we denote a random vector on ${\cal V}$ whose distribution has finite support. 
Let the skeleton $({\cal S}, \alpha_1, \dots,
\alpha_n)$ be such that the support of the distribution of $\mb{P}$ lies inside $\cl{\cal
S}$. Let $\mb{C}$ be a random element of ${\cal S}$ with the
following distribution: $\Prob(\mb{C}=S_i)=\Ex[\alpha_i(\mb{P})]$. This is well defined by the first condition on the functions $\alpha_1, \dots,
\alpha_n$. We will call $\mb{C}$ a \emph{survey} of $\mb{P}$ on the skeleton $({\cal S},
\alpha_1, \dots, \alpha_n)$.  
%This nomenclature aims to
%highlight the fact that in the Survey Propagation algorithm the
%messages passed on the edges are also discrete summaries of
%distributions that otherwise have very large (or continuous) support.

We say that $\mb{A}$ is a survey of $\mb{B}$ without specifying the
skeleton, whenever there exists a skeleton with respect to which
$\mb{A}$ is the survey of $\mb{B}$.

%%%%%%%%%%%%%%%%%%%%%%%%%%%%%

\subsection{Properties of surveys}

In this section we show several useful properties of surveys.

\begin{theorem}
\label{thm:transitivity}
If $\mb{P}$ is a distribution on ${\cal V}$ with finite support, 
%If $\mb{P}$ is a random vector on ${\cal V}$ whose distribution has finite support, 
$\mb{C}$ is a survey of $\mb{P}$, and $\mb{D}$ is a
survey of $\mb{C}$, then $\mb{D}$ is a survey of $\mb{P}$.
\end{theorem}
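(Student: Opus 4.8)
The plan is to prove transitivity of the survey relation by explicitly composing the two skeletons into a single skeleton, and constructing the corresponding decomposition functions.

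Setup and notation. Since $\mathbf{C}$ is a survey of $\mathbf{P}$, there is a skeleton $(\mathcal{S}, \alpha_1, \dots, \alpha_n)$ with $\mathrm{supp}(\mathbf{P}) \subseteq \cl{\mathcal{S}}$ and $\Prob(\mathbf{C} = S_i) = \Ex[\alpha_i(\mathbf{P})]$. Since $\mathbf{D}$ is a survey of $\mathbf{C}$, there is a skeleton $(\mathcal{T}, \beta_1, \dots, \beta_m)$ with $\mathcal{T} = (T_1, \dots, T_m)$, $\mathrm{supp}(\mathbf{C}) \subseteq \cl{\mathcal{T}}$, and $\Prob(\mathbf{D} = T_j) = \Ex[\beta_j(\mathbf{C})]$. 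Since $\mathbf{C}$ takes values in $\mathcal{S}$, which is finite, the relation $\mathrm{supp}(\mathbf{C}) \subseteq \cl{\mathcal{T}}$ gives $S_i \in \cl{\mathcal{T}}$ for every $i$ with $\Prob(\mathbf{C}=S_i)>0$; by discarding base points of $\mathcal{S}$ that are hit with probability zero (this changes neither $\mathbf{C}$ nor the survey property) we may assume every $S_i \in \cl{\mathcal{T}}$, so that $\beta_j(S_i)$ is defined for all $i,j$. Then $\cl{\mathcal{S}} \subseteq \cl{\mathcal{T}}$, so in particular $\mathrm{supp}(\mathbf{P}) \subseteq \cl{\mathcal{T}}$.

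Construction of the composite skeleton. I take $\mathcal{T}$ as the base set, and define new decomposition functions $\gamma_j : \cl{\mathcal{T}} \to [0,1]$ by
$$\gamma_j(\eta) := \sum_{i=1}^n \alpha_i(\eta)\, \beta_j(S_i), \qquad \eta \in \cl{\mathcal{S}},$$
and extend $\gamma_j$ arbitrarily to all of $\cl{\mathcal{T}}$ subject to the skeleton axioms (for instance, since $\mathrm{supp}(\mathbf{P}) \subseteq \cl{\mathcal{S}}$, only the values on $\cl{\mathcal{S}}$ matter for the survey computation, but to have a genuine skeleton on $\mathcal{T}$ one may set $\gamma_j$ on $\cl{\mathcal{T}} \setminus \cl{\mathcal{S}}$ to be any fixed convex decomposition into $\mathcal{T}$). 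I then check the two skeleton axioms for $(\mathcal{T}, \gamma_1, \dots, \gamma_m)$ on $\cl{\mathcal{S}}$: first, $\sum_j \gamma_j(\eta) = \sum_i \alpha_i(\eta) \sum_j \beta_j(S_i) = \sum_i \alpha_i(\eta) = 1$, using that $\beta$ sums to $1$ at each $S_i \in \cl{\mathcal{T}}$ and then that $\alpha$ sums to $1$; second, $\sum_j \gamma_j(\eta)\, T_j = \sum_i \alpha_i(\eta) \sum_j \beta_j(S_i) T_j = \sum_i \alpha_i(\eta)\, S_i = \eta$, using the reconstruction axiom for $\beta$ at $S_i$ and then for $\alpha$ at $\eta$.

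Identifying the survey. It remains to show that the survey of $\mathbf{P}$ on $(\mathcal{T}, \gamma_1, \dots, \gamma_m)$ has the same law as $\mathbf{D}$. By definition, this survey assigns to $T_j$ the probability
$$\Ex[\gamma_j(\mathbf{P})] = \Ex\!\left[\sum_{i=1}^n \alpha_i(\mathbf{P})\, \beta_j(S_i)\right] = \sum_{i=1}^n \Ex[\alpha_i(\mathbf{P})]\, \beta_j(S_i) = \sum_{i=1}^n \Prob(\mathbf{C}=S_i)\, \beta_j(S_i) = \Ex[\beta_j(\mathbf{C})] = \Prob(\mathbf{D}=T_j),$$
where the interchange of expectation and finite sum is trivial, and the last two equalities use the definitions of the surveys $\mathbf{C}$ of $\mathbf{P}$ and $\mathbf{D}$ of $\mathbf{C}$ respectively. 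Hence $\mathbf{D}$ is a survey of $\mathbf{P}$ on the composite skeleton, as claimed. I do not expect a serious obstacle here; the only mild subtlety is the bookkeeping about where the $\gamma_j$ are defined — one must ensure $S_i \in \cl{\mathcal{T}}$ so that $\beta_j(S_i)$ makes sense, which is handled by the pruning remark above — and everything else is a direct computation with convex combinations and linearity of expectation.
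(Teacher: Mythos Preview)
Your proof is correct and follows essentially the same approach as the paper: define the composite decomposition $\gamma_j(\eta) = \sum_i \alpha_i(\eta)\beta_j(S_i)$, verify the two skeleton axioms, and then check $\Prob(\mathbf{D}=T_j)=\Ex[\gamma_j(\mathbf{P})]$ by unwinding the definitions. Your added care about ensuring $S_i \in \cl{\mathcal{T}}$ (via pruning zero-probability base points) is a small technical refinement that the paper leaves implicit.
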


\begin{proof}
Suppose 
$\mb{C}$ is a survey of $\mb{P}$ on a skeleton $({\cal S}, \alpha_1, \dots, \alpha_n)$, and 
$\mb{D}$ is a survey of $\mb{C}$ on a skeleton $({\cal T}, \beta_1, \dots, \beta_m)$. 
Let $\gamma_i(\eta):=\sum_{j=1}^n \alpha_j(\eta) \beta_i(S_j)$ for $i=1, \dots, n$. 
Then $({\cal T}, \gamma_1, \dots, \gamma_m)$ is a valid skeleton, because for every 
$\eta\in {\cal V}$ 
$$\sum_{i=1}^m \gamma_i(\eta) =\sum_{i=1}^m\sum_{j=1}^n \alpha_j(\eta) \beta_i(S_j)=
 \sum_{j=1}^n \alpha_j(\eta)  \sum_{i=1}^m \beta_i(S_j) = 
 \sum_{j=1}^n \alpha_j(\eta)  = 1,$$
$$\sum_{i=1}^m \gamma_i(\eta) T_i =
\sum_{i=1}^m \sum_{j=1}^n \alpha_j(\eta) \beta_i(S_j) T_i = 
\sum_{j=1}^n \alpha_j(\eta) ~ \sum_{i=1}^n \beta_i(S_j) T_i = 
\sum_{j=1}^n \alpha_j(\eta) S_j = \eta.$$
Finally, we can verify that $\mb{D}$ is a survey of $\mb{P}$ on the skeleton $({\cal T}, \gamma_1, \dots, \gamma_m)$.
\begin{eqnarray*}
\Prob(\mb{D}=T_i) &=& \Ex[\beta_i(\mb{C})] = 
\sum_{j=1}^n \Prob(\mb{C}=S_j) ~ \beta_i(S_j) 
= \sum_{j=1}^n \Ex[\alpha_j(\mb{P})] ~ \beta_i(S_j) \\
&=& \Ex\left[ \sum_{j=1}^n \alpha_j(\mb{P}) \beta_i(S_j)\right] 
= \Ex[\gamma_i(\mb{P})]
\end{eqnarray*}
\end{proof}

\begin{theorem}
\label{thm:mix}
Let $\mb{P_1}$ and $\mb{P_2}$ be independent distributions on ${\cal V}$ with finite support, 
%Let $\mb{P_1}$ and $\mb{P_2}$ be independent random vectors in ${\cal V}$ whose distributions have finite support, 
and let $\mb{C_1}$ and
$\mb{C_2}$ be their surveys. Suppose a distribution $\mb{P}$ on ${\cal V}$ is defined to be equal to $\mb{P_1}$ with probability $p\ge 0$ and $\mb{P_2}$ with
probability $1-p$, and similarly $\mb{C}$ is defined to be $\mb{C_1}$
with probability $p$ and $\mb{C_2}$ with probability $1-p$. Then
$\mb{C}$ is a survey of $\mb{P}$.
\end{theorem}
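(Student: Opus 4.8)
The plan is to exhibit an explicit skeleton on which $\mb{C}$ is the survey of $\mb{P}$, built from the two skeletons we are given for the pairs $(\mb{P_1},\mb{C_1})$ and $(\mb{P_2},\mb{C_2})$. Say $\mb{C_1}$ is a survey of $\mb{P_1}$ on the skeleton $(\mathcal S, \alpha_1,\dots,\alpha_n)$ and $\mb{C_2}$ is a survey of $\mb{P_2}$ on $(\mathcal T, \beta_1,\dots,\beta_m)$. The natural candidate is the concatenated base set $\mathcal U = (S_1,\dots,S_n,T_1,\dots,T_m)$, and the difficulty is to define convex-decomposition functions on the \emph{whole} convex hull $\cl{\mathcal U}$ — not merely on $\cl{\mathcal S}$ and $\cl{\mathcal T}$ separately — that restrict correctly on each piece and reproduce the mixed survey distribution.

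First I would handle the easy direction: on $\cl{\mathcal S}$ use $(\alpha_1(\eta),\dots,\alpha_n(\eta),0,\dots,0)$ and on $\cl{\mathcal T}$ use $(0,\dots,0,\beta_1(\eta),\dots,\beta_m(\eta))$; both are valid convex decompositions of $\eta$ over $\mathcal U$. If $\mathcal P$ were supported only on $\cl{\mathcal S}\cup\cl{\mathcal T}$ we would be almost done, but in general we must extend $\alpha,\beta$ to arbitrary points of $\cl{\mathcal U}$. To do this I would pick, once and for all, any fixed convex decomposition $\delta_1(\eta),\dots,\delta_{n+m}(\eta)$ of each $\eta\in\cl{\mathcal U}$ over $\mathcal U$ (such functions exist by Carathéodory, and no continuity is needed). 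Then define the combined functions: for $\eta\in\cl{\mathcal S}$, use the $\alpha$-decomposition extended by zeros; for $\eta\in\cl{\mathcal T}\setminus\cl{\mathcal S}$, use the $\beta$-decomposition extended by zeros; for all other $\eta$, use $\delta$. One checks conditions (1) and (2) of a skeleton hold pointwise, so this is a valid skeleton on $\mathcal U$.

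Next I would verify that the survey of $\mb{P}$ on this skeleton has exactly the law of $\mb{C}$. Since $\mathcal P$ equals $\mathcal P_1$ with probability $p$ and $\mathcal P_2$ with probability $1-p$, and the support of $\mathcal P_1$ lies in $\cl{\mathcal S}$ while that of $\mathcal P_2$ lies in $\cl{\mathcal T}$, the expectation $\Ex[\text{(combined function)}(\mb{P})]$ splits as $p\,\Ex[\alpha_i(\mb{P_1})] + (1-p)\,0$ for the first $n$ coordinates and $p\cdot 0 + (1-p)\,\Ex[\beta_j(\mb{P_2})]$ for the last $m$ coordinates. The one subtlety here is points of $\mathcal P_2$ that happen to land in $\cl{\mathcal S}$, where the combined function is the $\alpha$-extension rather than the $\beta$-extension; I would resolve this either by assuming (WLOG, after perturbing or relabelling) the base sets are in "general position", or — cleaner — by noting that for such a point the combined function still gives \emph{a} valid convex decomposition, and then arguing directly that the resulting survey has the right marginals. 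In fact the slickest route avoids this entirely: define $\mb{P}$ on an augmented space carrying an explicit $\{1,2\}$-valued label equal to $1$ w.p.\ $p$, run the two skeletons on the two labelled parts, and observe $\Prob(\mb{C}=S_i)=p\,\Ex[\alpha_i(\mb{P_1})]=p\,\Prob(\mb{C_1}=S_i)$ and $\Prob(\mb{C}=T_j)=(1-p)\,\Prob(\mb{C_2}=T_j)$, which is precisely the law of the mixture $\mb{C}$.

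The main obstacle is thus purely bookkeeping: making sure the combined decomposition functions are well-defined and valid \emph{everywhere} on $\cl{\mathcal U}$ and that overlaps $\cl{\mathcal S}\cap\cl{\mathcal T}$ do not corrupt the marginal count — and I expect the label-augmentation trick to be the cleanest way to sidestep it, since it keeps the contributions from $\mb{P_1}$ and $\mb{P_2}$ strictly separated by construction.
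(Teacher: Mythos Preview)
Your plan is on the same track as the paper's—both use the union of the two base sets as the new base set—but the proposal does not actually close the key gap you yourself flag. The piecewise definition (use $\alpha$ on $\cl{\mathcal S}$, $\beta$ on $\cl{\mathcal T}\setminus\cl{\mathcal S}$, $\delta$ elsewhere) genuinely fails on overlaps, and neither of your suggested repairs works as stated. Assuming ``general position'' is not justified: nothing in the hypotheses prevents $\supp(\mb{P_2})$ from meeting $\cl{\mathcal S}$, and perturbing base points changes the surveys. The label-augmentation trick correctly computes the law of $\mb{C}$, but it does not produce a skeleton on $\mathcal V$: by definition the decomposition functions must be maps $\gamma_j:\cl{\mathcal U}\to[0,1]$ depending on $\eta$ alone, not on an auxiliary $\{1,2\}$-valued coordinate. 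Passing to the augmented space $\mathcal V\times\{1,2\}$ and then projecting back is exactly the step that needs work, and you have not supplied it.

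The paper closes this gap in one line by taking the \emph{posterior}-weighted mixture of the two decompositions. For $\eta$ in the support of $\mb{P}$ set
\[
\gamma_j(\eta)\;=\;\frac{p\,\Prob(\mb{P_1}=\eta)}{\Prob(\mb{P}=\eta)}\,\alpha_j(\eta)\;+\;\frac{(1-p)\,\Prob(\mb{P_2}=\eta)}{\Prob(\mb{P}=\eta)}\,\beta_j(\eta),
\]
with the convention that $\alpha_j$ (respectively $\beta_j$) is zero for indices outside its range and at points outside its original convex hull. Since $\mb{P_1},\mb{P_2}$ are fixed distributions, each $\gamma_j$ is a bona fide function of $\eta$; the skeleton conditions $\sum_j\gamma_j(\eta)=1$ and $\sum_j\gamma_j(\eta)U_j=\eta$ follow because $\gamma$ is a convex combination of two valid decompositions of $\eta$; and the verification $\Ex[\gamma_j(\mb{P})]=\Prob(\mb{C}=U_j)$ is immediate once you expand $\Prob(\mb{P}=\eta)=p\,\Prob(\mb{P_1}=\eta)+(1-p)\,\Prob(\mb{P_2}=\eta)$. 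This is precisely the rigorous translation of your label idea: the weight on the $\alpha$-part at $\eta$ is $\Prob(\text{label}=1\mid \mb{P}=\eta)$. Adding that one formula to your argument would complete it.
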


\begin{proof}
Suppose $\mb{C_1}$ and $\mb{C_2}$ are surveys respectively of $\mb{P_1}$ and $\mb{P_2}$ 
on skeletons $({\cal S}, \alpha_1, \dots, \alpha_n)$, and $({\cal T}, \beta_1, \dots, \beta_m)$.  
Let ${\cal R}={\cal S} \cap {\cal  T}$ and $|{\cal R}|=k$. Without loss of generality, let's suppose that 
${\cal R}= \{S_1, \dots, S_k \}$ and $S_i=T_i$ for $i=1, \dots, r$.  Let's denote the union of the two basis sets as ${\cal U} = (U_1, \dots, U_{m+n-k}) = (S_1, \dots, S_n, T_{r+1}, \dots, T_m )$. This set will be the basis of the skeleton. Next we define the functions:
\begin{equation*}
\gamma_j(\eta) :=\frac{1}{\Prob(\mb{P}=\eta)} \times 
\begin{cases} 
p ~ \alpha_j(\eta)~ \Prob(\mb{P_1}=\eta) + (1-p) ~\beta_j(\eta) ~\Prob(\mb{P_2}=\eta)& j = 1, \dots,  r\\
p~\alpha_j(\eta) ~ \Prob(\mb{P_1}=\eta) & j = r+1, \dots, n\\
(1-p)~\beta_{j-(n-r)}(\eta) ~\Prob(\mb{P_2}=\eta) & j = n+1, \dots, m+n-k\\
\end{cases}
\end{equation*}
It is  straightforward to check that $({\cal U}, \gamma_1, \dots, \gamma_{m+n-k} )$ is a valid skeleton. To check that $\mb{C}$ is a survey of $\mb{P}$ on this skeleton we just need $\Prob(\mb{C}=U_j) = \Ex[\gamma_j(\mb{P})]. $\ For $j=1, \dots, r$
\begin{eqnarray*}
\Prob(\mb{C}= U_j) &=& p ~\Prob(\mb{C_1}=U_j) + (1-p) ~\Prob(\mb{C_2}=U_j) \\
&=&  p ~\Ex[\alpha_j(\mb{P_1})] + (1-p)~\Ex[\beta_j(\mb{P_2})] \\
&=& \sum_{\eta \in {\cal P}}  \left(p~ \Prob(\mb{P_1}=\eta) ~\alpha_j(\eta) +
(1-p) ~\Prob(\mb{P_2}=\eta)~\beta_j(\eta) \right) \\ 
&=& \sum_{\eta \in {\cal P}} \Prob(\mb{P}=\eta) ~ \gamma_j(\eta) \\
&=&\Ex[\gamma_j(\mb{P})].
\end{eqnarray*}
Similarly, for $j=r+1, \dots, n$, and for $j=n+1, \dots, n+m-k$ we have respectively
\begin{eqnarray*}
\Prob(\mb{C}= U_j) =& 
p ~\Prob( \mb{C_1}=U_j)~~~~~~~~ = 
~~~~~~~~p ~ \Ex[\alpha_j(\mb{P_1})] &= 
 \Ex[\gamma_j(\mb{P})], \\
 \Prob(\mb{C}= U_j) =& 
 (1-p) ~\Prob( \mb{C_2}=U_{j-(n-r)}) = 
 (1-p) ~\Ex[\beta_{j-(n-r)}(\mb{P_2})] &= 
 \Ex[\gamma_j(\mb{P})],
 \end{eqnarray*} 
\end{proof}

The next theorem is the one that allows us to use surveys in the context of the reconstruction recursion of Theorem \ref{thm:recon}. Let ${\cal V}_1, \dots, {\cal V}_r$ denote real vector spaces of finite dimensions. We say that a function 
$f: {\cal V}_1 \times \dots \times {\cal V}_r  \rightarrow {\cal V}$ is {\em multi-affine} if for every 
$\eta_1 \in {\cal V}_1, \eta_2 \in {\cal V}_2, \dots, \eta_r \in {\cal V}_r$, $a\ge 0$ and $b \ge 0$, such that $a+b=1$, $i \in \{1, \dots, r\}$, and  $\eta_i' \in {\cal V}_i$,
 it holds that 
 $$f(\eta_1, \dots, a\eta_i+b\eta_i', \dots, \eta_r)=
a f(\eta_1, \dots, \eta_i, \dots, \eta_r)+ b f(\eta_1, \dots,
\eta_i', \dots, \eta_r).$$ Recall also that for a vector $\eta \in {\cal V}$ 
we denote by $\|\eta\|$ the sum of the coordinates of $\eta$. 

\begin{theorem}
\label{thm:main}
Let $f:  {\cal V}_1 \times \dots \times {\cal V}_r  \rightarrow {\cal V}$ 
be a multi-affine function. 
Let  $\mb{P_1}, \dots, \mb{P_r}$ be independent distributions on ${\cal V}_1, \dots, {\cal V}_r$ with finite support,
%Let $\mb{P_1}, \dots, \mb{P_r}$ be independent random vectors in ${\cal V}_1, \dots, {\cal V}_r$ whose distributions have finite support, 
and $\mb{C_1}, \dots, \mb{C_r}$ be their respective surveys. 
If $\mb{Q}$ and $\mb{D}$ are random elements of ${\cal V}$ defined in the following way:
$$\Prob(\mb{Q}= \eta) \propto \Ex\left[\|f(\mb{P_1}, \dots, \mb{P_r})\| 
\times \Ind\left[f(\mb{P_1}, \dots, \mb{P_r}) \propto \eta \right]\right],$$
$$\Prob(\mb{D}= \eta) \propto \Ex\left[\|f(\mb{C_1}, \dots, \mb{C_r})\| 
\times \Ind\left[f(\mb{C_1}, \dots, \mb{C_r}) \propto \eta \right]\right],$$
then $\mb{D}$ is a survey of $\mb{Q}$.
\end{theorem}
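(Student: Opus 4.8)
The plan is to first prove the case $r=1$ (a single affine map), and then reduce the general multi-affine case to it by replacing the arguments $\mb{P_i}$ by $\mb{C_i}$ one at a time, re-gluing the pieces with Theorems~\ref{thm:transitivity} and~\ref{thm:mix}.

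So suppose first that $g:{\cal V}'\to{\cal V}$ is affine, $\mb{C}$ is a survey of $\mb{P}$ on a skeleton $({\cal S},\alpha_1,\dots,\alpha_n)$ with $\supp(\mb{P})\subseteq\cl{\cal S}$, and $\mb{Q},\mb{D}$ are the random elements of ${\cal V}$ with $\Prob(\mb{Q}=\eta)\propto\Ex[\|g(\mb{P})\|\,\Ind[g(\mb{P})\propto\eta]]$ and $\Prob(\mb{D}=\eta)\propto\Ex[\|g(\mb{C})\|\,\Ind[g(\mb{C})\propto\eta]]$; I claim $\mb{D}$ is a survey of $\mb{Q}$. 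Writing $\widehat g(\xi):=g(\xi)/\|g(\xi)\|$ (and discarding any $S_i$ or atom $\xi$ with $\|g(\cdot)\|=0$, which contribute nothing), take as the new base set the finite sequence ${\cal T}=(\widehat g(S_1),\dots,\widehat g(S_n))$. Since $\|\cdot\|$ is linear and $g$ is affine, $\|g(\xi)\|=\sum_i\alpha_i(\xi)\|g(S_i)\|$ and hence $\widehat g(\xi)=\sum_i\frac{\alpha_i(\xi)\|g(S_i)\|}{\|g(\xi)\|}\widehat g(S_i)$ for every $\xi\in\cl{\cal S}$; in particular $\supp(\mb{Q})\subseteq\cl{\cal T}$ and both normalizing constants equal $Z:=\Ex[\|g(\mb{P})\|]$. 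The coefficients just written depend on $\xi$ rather than only on $\eta=\widehat g(\xi)$, so to obtain skeleton decomposition functions I would average over the fibre: for $\eta\in\supp(\mb{Q})$ set
$$\beta_i(\eta)\ :=\ \frac{\sum_{\xi:\,\widehat g(\xi)=\eta}\Prob(\mb{P}=\xi)\,\alpha_i(\xi)\,\|g(S_i)\|}{\sum_{\xi:\,\widehat g(\xi)=\eta}\Prob(\mb{P}=\xi)\,\|g(\xi)\|}$$
(the denominator is positive exactly on $\supp(\mb{Q})$), extending $\beta_i$ arbitrarily to a valid convex decomposition on the rest of $\cl{\cal T}$. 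One then checks that $({\cal T},\beta_1,\dots,\beta_n)$ is a skeleton — $\sum_i\beta_i(\eta)=1$ and $\sum_i\beta_i(\eta)\widehat g(S_i)=\eta$ on $\supp(\mb{Q})$, using the two identities above — and that $\Ex[\beta_i(\mb{Q})]=Z^{-1}\Ex[\alpha_i(\mb{P})]\|g(S_i)\|=\Prob(\mb{D}=\widehat g(S_i))$, where the weights $\|g(\xi)\|$ cancel against $\Prob(\mb{Q}=\eta)=Z^{-1}\sum_{\xi:\,\widehat g(\xi)=\eta}\Prob(\mb{P}=\xi)\|g(\xi)\|$.

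For general $r$, realize all the $\mb{P_i}$ and all the $\mb{C_i}$ on mutually independent coordinates and, for $0\le k\le r$, put $\mb{R}_k:=(\mb{C_1},\dots,\mb{C_k},\mb{P_{k+1}},\dots,\mb{P_r})$ and let $\mb{Q}_k$ be defined by $\Prob(\mb{Q}_k=\eta)\propto\Ex[\|f(\mb{R}_k)\|\,\Ind[f(\mb{R}_k)\propto\eta]]$, so $\mb{Q}_0=\mb{Q}$ and $\mb{Q}_r=\mb{D}$. By Theorem~\ref{thm:transitivity} it suffices to show $\mb{Q}_k$ is a survey of $\mb{Q}_{k-1}$ for each $k$. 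Conditioning on $W:=(\mb{C_1},\dots,\mb{C_{k-1}},\mb{P_{k+1}},\dots,\mb{P_r})$, for each value $w$ the map $g_w(\eta):=f(c_1,\dots,c_{k-1},\eta,p_{k+1},\dots,p_r)$ is affine — the only use of multi-affinity — and conditioning writes $\mb{Q}_{k-1}$ as a finite mixture of the distributions $\mb{Q}_{k-1}^{w}$ built from $g_w$ and $\mb{P_k}$ as in the case $r=1$, with mixing weights proportional to $\Prob(W=w)\Ex[\|g_w(\mb{P_k})\|]$, and likewise $\mb{Q}_k$ as the mixture of the $\mb{Q}_{k}^{w}$ built from $g_w$ and $\mb{C_k}$, with weights proportional to $\Prob(W=w)\Ex[\|g_w(\mb{C_k})\|]$. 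The case $r=1$ gives that each $\mb{Q}_{k}^{w}$ is a survey of $\mb{Q}_{k-1}^{w}$ and, crucially, that $\Ex[\|g_w(\mb{C_k})\|]=\Ex[\|g_w(\mb{P_k})\|]$, so the two mixtures use the same weights; Theorem~\ref{thm:mix}, extended from two to finitely many components by an easy induction, then gives that $\mb{Q}_k$ is a survey of $\mb{Q}_{k-1}$. Chaining over $k=1,\dots,r$ completes the proof.

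I expect the $r=1$ case — specifically the construction of the $\beta_i$ — to be the main obstacle: the natural decomposition coefficients are not a function of the image point alone, and the fibre-averaged replacement has to be shown at once to be a genuine convex decomposition and to reproduce the survey weights. Once that is in hand, the equality of normalizing constants (which makes the mixture weights match), the reduction to one coordinate at a time, and the assembly via Theorems~\ref{thm:transitivity} and~\ref{thm:mix} are all routine bookkeeping.
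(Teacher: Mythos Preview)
Your $r=1$ argument is essentially the paper's: the fibre-averaged $\beta_i$ you write down coincides with the paper's $\beta_j$ once indices with the same image $\widehat g(S_i)$ are grouped (the paper does this explicitly via sets $I_j=\{i:\widehat g(S_i)=T_j\}$; you should too, since otherwise the stated equality $\Ex[\beta_i(\mb Q)]=\Prob(\mb D=\widehat g(S_i))$ holds only after summing over such $i$). Where you genuinely diverge is in the passage to $r>1$. The paper does not iterate: it simply reruns the $r=1$ computation with multi-indices, taking as base set $\{f(S^1_{i_1},\dots,S^r_{i_r})/\|f(S^1_{i_1},\dots,S^r_{i_r})\|\}$ and replacing $\alpha_i$ by the product $\prod_k\alpha^k_{i_k}$; multi-affinity makes this product behave exactly like a single $\alpha$ in the same three-line calculation. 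Your route instead swaps $\mb P_k$ for $\mb C_k$ one coordinate at a time, conditioning on the frozen coordinates, using the equality of normalizers $\Ex[\|g_w(\mb C_k)\|]=\Ex[\|g_w(\mb P_k)\|]$ to match the mixture weights, and invoking Theorems~\ref{thm:mix} and~\ref{thm:transitivity} to re-glue and chain. This is correct and has the virtue of isolating the only new idea (the affine case) while reusing the closure properties already proved; the paper's direct extension trades that modularity for avoiding the intermediate distributions $\mb Q_k$ and the finite-mixture induction, at the cost of heavier indexing. The substantive content --- the fibre-averaged skeleton and the preservation of the normalizing constant --- is the same in both.
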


\begin{proof}
We first demonstrate the proof for the case $r=1$. We denote $\mb{P_1}$ by $\mb{P}$ 
and $\mb{C_1}$ by $\mb{C}$.

Suppose $\mb{C}$ is a survey of $\mb{P}$ on the skeleton $({\cal S}, \alpha_1, \dots, \alpha_n)$. 
Consider the set $\{f(S_i)/\|f(S_i)\|: S_i \in {\cal S}\}$. The size of
this set can be less than $n$ if for two different indexes $1\le
i<i'\le n$ it happens that
$f(S_i)/\|f(S_i)\|=f(S_{i'})/\|f(S_{i'})\|$. Let's denote by ${\cal
T}=(T_1, \dots, T_m)$ the ordered list of distinct elements
corresponding to the above set. Let
$I_j = \{i: f(S_i)/\|f(S_i)\|=T_j \}$.

Since the support of $\mb{C}$ is
${\cal S}$ it follows that the support of $\mb{D}$ is contained in
${\cal T}$.  It suffices to find $\beta_1,
\dots, \beta_m$ non-negative functions on ${\cal V}$ such that
$\Prob(\mb{D}=T_j) = \Ex[\beta_j(\mb{Q})]$. We begin with the left-hand side:
\begin{eqnarray*}
\Prob(\mb{D}=T_j) &\propto& \sum_{i=1}^n \Prob(\mb{C}=S_i) 
\times \|f(S_i)\| \times \Ind[f(S_i)/\|f(S_i)\|=T_j] \\
&=& \sum_{i\in I_j} \Prob(\mb{C}=S_i) \times \|f(S_i)\| \\
&=& \sum_{i\in I_j} \Ex[\alpha_i(\mb{P})] \times \|f(S_i)\| \\
&=& \Ex\left[ \sum_{i\in I_j} \alpha_i(\mb{P})~ \|f(S_i)\|\right] 
\end{eqnarray*}
Thus the normalization constant for the above probability is:
\begin{eqnarray*}
\sum_{j=1}^m \Ex\left[\sum_{i\in I_j} \alpha_i(\mb{P}) ~ \|f(S_i)\|\right] 
=
\Ex\left[\sum_{i=1}^n \alpha_i(\mb{P}) ~ \|f(S_i)\|\right] 
= \Ex\left[\| f \left( \sum_{i=1}^n \alpha_i(\mb{P})S_i\right)\|  \right]
= \Ex[\|f(\mb{P})\|],
\end{eqnarray*}
where the second equality follows from the fact that $f$ is multi-affine.

Next, we look at the right hand-side of the desired equality.  For
every $\eta \in {\cal V}$ we define 
\begin{eqnarray*}
W(\eta) &=& \Ex[\|f(\mb{P})\| \times \Ind[f(\mb{P}) \propto \eta]],\\
W &=& \Ex[\|f(\mb{P})\|].\\
\end{eqnarray*}
Then by the definition of $\mb{Q}$,
$\Prob(\mb{Q}=\eta) = W(\eta)/W$. For every $j\in \{1, \dots, m\}$,
let's define 
$$\beta_j (\eta) = \frac{1}{W(\eta)}~
\Ex\left[\sum_{i \in I_j}\alpha_i(\mb{P}) \times \|f(S_i)\|
\times \Ind[ f(\mb{P}) \propto \eta]  \right].$$
It is easy to check that $\sum_{j=1}^m \beta_j(\eta)=1$, and
$\sum_{j=1}^m \beta_j(\eta)~T_j=\eta$ for every $\eta \in {\cal V}$. Finally,
\begin{eqnarray*}
\Ex[\beta_j(\mb{Q})] &=& 
\sum_{\eta\in \supp(\mb{Q})} \Prob(\mb{Q}=\eta) \times
\frac{1}{W(\eta)}~
\Ex \left[\sum_{i \in I_j} \alpha_i(\mb{P}) \times \|f(S_i)\|
\times \Ind[ f(\mb{P}) \propto \eta] \right] \\
&=&
\sum_{\eta\in \supp(\mb{Q})} \frac{W(\eta)}{W} \times \frac{1}{W(\eta)} ~
\Ex \left[\sum_{i \in I_j} \alpha_i(\mb{P}) \times \|f(S_i)\|
\times \Ind[ f(\mb{P}) \propto \eta] \right]  \\
&=& \frac{1}{W}~ \Ex\left[\sum_{i \in I_j} \alpha_i(\mb{P})~ \|f(S_i)\|\right] 
= \Prob(\mb{D}=T_j)
\end{eqnarray*}

Next we generalize the proof to $r>1$. 
Suppose $\mb{C_1}, \dots, \mb{C_r}$ are surveys on skeletons 
$({\cal S}^1, \alpha^1_1, \dots, \alpha^1_{n_1})$, \dots, 
$({\cal S}^r, \alpha^r_1, \dots, \alpha^r_{n_r})$. 
The same proof applies by defining the base set for the
skeleton to be $$\{ f(S^1_{i_1}, \dots, S^r_{i_r})/\|f(S^1_{i_1}, \dots,
S^r_{i_r})\| : S^1_{i_1} \in {\cal S}^1, \dots, S^r_{i_r}\in {\cal S}^r\}$$ 
and changing the notation to
$I_j = \{(i_1, \dots, i_r): f(S^1_{i_1}, \dots,
S^r_{i_r})/\|f(S^1_{i_1}, \dots, S^r_{i_r})\| = T_j\}$, and
$\mb{P}=(\mb{P_1}, \dots, \mb{P_r})$. We have that

\begin{eqnarray*}
\Prob[\mb{D}=T_j] 
&=& \frac{\Ex\left[ \sum_{(i_1, \dots, i_r)\in I_j} 
\left(\prod_{k=1}^r \alpha^k_{i_k}(\mb{P_k})\right)~ 
\|f(S^1_{i_1}, \dots, S^r_{i_r})\|\right] }
{\Ex[\|f(\mb{P})\|]} = \Ex[\beta_j(\mb{Q})],
\end{eqnarray*}
where
$$\beta_j (\eta) = \frac{1}{W(\eta)}~
\Ex\left[\sum_{(i_1, \dots, i_r) \in I_j} 
\left(\prod_{k=1}^r \alpha^k_{i_k}(\mb{P_k})\right) \times \|f(S^1_{i_1}, \dots, S^r_{i_r})\|
\times \Ind[ f(\mb{P}) \propto \eta]  \right].$$
\end{proof}

\begin{theorem}
\label{thm:convex}
Let $g$ be a convex function on ${\cal V}$.  
For every distribution $\mb{P}$  on  ${\cal V}$ of finite support, 
%For every random vector $\mb{P}$ in ${\cal V}$, whose distribution has finite support, 
and $\mb{C}$ a survey of $\mb{P}$, $\Ex[g(\mb{P})] \le \Ex[g(\mb{C})]$ and
the equality holds if $g$ is an affine function.
\end{theorem}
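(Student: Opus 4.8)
The plan is to unwind the definition of a survey and then apply Jensen's inequality pointwise. Fix a skeleton $({\cal S}, \alpha_1, \dots, \alpha_n)$ with respect to which $\mb{C}$ is a survey of $\mb{P}$, so that $\supp(\mb{P}) \subseteq \cl{\cal S}$ and $\Prob(\mb{C} = S_i) = \Ex[\alpha_i(\mb{P})]$ for each $i$. Using this and linearity of expectation, I would first rewrite
$$\Ex[g(\mb{C})] = \sum_{i=1}^n \Prob(\mb{C}=S_i)\, g(S_i) = \sum_{i=1}^n \Ex[\alpha_i(\mb{P})]\, g(S_i) = \Ex\!\left[\sum_{i=1}^n \alpha_i(\mb{P})\, g(S_i)\right].$$

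The second step is the pointwise comparison. For every $\eta \in \cl{\cal S}$, the two skeleton conditions say exactly that $(\alpha_1(\eta), \dots, \alpha_n(\eta))$ is a vector of nonnegative weights summing to $1$ and that $\eta = \sum_{i=1}^n \alpha_i(\eta) S_i$ is the corresponding convex combination of the base points. Since $g$ is convex, Jensen's inequality gives $g(\eta) = g\!\big(\sum_i \alpha_i(\eta) S_i\big) \le \sum_i \alpha_i(\eta)\, g(S_i)$. Applying this with $\eta = \mb{P}$ and taking expectations (the support is finite, so there are no convergence issues) yields
$$\Ex[g(\mb{P})] \le \Ex\!\left[\sum_{i=1}^n \alpha_i(\mb{P})\, g(S_i)\right] = \Ex[g(\mb{C})],$$
where the last equality is the identity derived above.

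For the equality claim, if $g$ is affine then $g\!\big(\sum_i \alpha_i(\eta) S_i\big) = \sum_i \alpha_i(\eta)\, g(S_i)$ holds with equality for every $\eta \in \cl{\cal S}$ (affine maps commute with convex combinations), so the Jensen step above is an equality pointwise, and hence $\Ex[g(\mb{P})] = \Ex[g(\mb{C})]$. I do not expect any real obstacle here: the only thing to be careful about is invoking the correct clause of the skeleton definition to justify that $(\alpha_i(\eta))_i$ is genuinely a probability vector realizing $\eta$ as a convex combination of $\cal S$, which is precisely what licenses Jensen; everything else is bookkeeping with finite sums and linearity of expectation.
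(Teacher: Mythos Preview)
Your proposal is correct and follows essentially the same argument as the paper: expand $\Ex[g(\mb{C})]$ via the survey definition, pull the sum inside the expectation by linearity, and apply Jensen's inequality pointwise using the skeleton identity $\eta=\sum_i \alpha_i(\eta)S_i$. Your write-up is in fact slightly more explicit about the equality case for affine $g$ than the paper's own proof.
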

\begin{proof}
\begin{eqnarray*}
\Ex[g(\mb{C})] &=& \sum_{i=1}^n \Prob[\mb{C}=S_i]~g(S_i) 
= \sum_{i=1}^n \Ex[\alpha_i(\mb{P})] ~g(S_i) \\
&=& \Ex \left[\sum_{i=1}^n \alpha_i(\mb{P}) ~g(S_i) \right]
\ge \Ex \left[g\left(\sum_{i=1}^n \alpha_i(\mb{P})~ S_i\right) \right]
= \Ex[g(\mb{P})]
\end{eqnarray*}
\end{proof}

\subsection {Using surveys to bound the probability of reconstruction}

Using the theorems from the previous section now we can show how to calculate recursively a survey of the residual distribution at the root of a random MRF on a skeleton of bounded size. Assume first that the degree distributions and potential function distributions for all levels have small support. Suppose we have a survey $\mb{C}$ for the random MRF of depth $n$. For level $n+1$ first we calculate, for each possible instantiation of the degree $r$ and potential functions $\Psi_1, \dots,\Psi_r$ at the root, a survey of the residual distribution at the root using $r$ copies of $\mb{C}$ and the recursion of Theorem $\ref{thm:recon}$. The result is a survey of the true residual distribution for this instantiation by Theorem \ref{thm:main}. Suppose the probability of this instantiation is $p(r, \Psi_1 , \dots, \Psi_r)$. Next, we combine these distributions by defining a distribution equal to the surveys of  each of the instantiations with probability $p(r, \Psi_1 , \dots, \Psi_r)$. This is a survey of the true residual distribution for the random MRF of depth $n+1$ by 
Theorem \ref{thm:mix}. Finally, if the support of the resulting survey is bigger than the required bound, we can choose a smaller skeleton and compute a survey of the survey, which by Theorem \ref{thm:transitivity} is also a survey of the true residual distribution.

Finally, since distance from a fixed distribution is a convex function, by Theorem 
\ref{thm:convex} the expected distance of the survey of the residual distribution from $
\pi$ is an upper bound on the distance of the true residual distribution from $\pi$. The quality of this bound of course depends on how the skeletons were chosen at each step.

If the bound on the skeleton size is $b$, the maximum degree possible for the tree is $\Delta$, and the support of the potential-function distribution for every level is at most $k$, then the complexity of the computation of the survey is $O(n (kb)^\Delta)$. 
The exponential complexity in $\Delta$ is in practice prohibitive, because in order to obtain surveys that give good bounds for the probability of reconstruction, $b$ may have to be large. However, the important improvement here is that while the exact computation is exponential in $n$, computing the surveys takes time linear in $n$.

A few more remarks regarding implementation are in order:
\begin{enumerate}
\item 
\label{it:deg}
It is not impossible to handle the case when the degree distribution has infinite support. The cases of the small degrees can be computed as above, and for large degrees, if the residual distribution is known to be symmetric (for example if the potential functions are symmetric), then one can use the trivial survey, the one whose basis set is the set of basis vectors. The uniform distribution on the basis vectors is a survey of every symmetric distribution.

\item The size of the domains of the variables also influences the complexity of the algorithm. The computation of the function $f$ in the recursive step in general requires time $|D_{v_1}| \times \dots \times |D_{v_r}|$. However the more important factor is that the size of the skeleton may have to grow significantly with the size of the domain in order to obtain the desired bound. This dependence will have to be studied in the context of particular models. 

\item The strategy of choosing the skeleton at every step crucially
  influences the quality of the bounds. It may be that one type
  of skeleton is beneficial in the beginning iterations and a different type in the
  later ones. In our application we used small base sets in the
  beginning, which makes the first iterations faster, and refined the
  base sets (increased their size) progressively. Perhaps strategies
  for choosing the skeleton can be designed based on the current
  distribution (such as sampling a few probability vectors from it),
  but we have not found such a general-purpose strategy that performs
  well. 

\item In order to obtain rigorous results, the computation of the
  survey of the residual distribution has to be carried out with
  rational numbers. However, naturally the size of the denominators
  increases exponentially, thus a ``rounding" step is required at
  every iteration. In the case of symmetric distributions, such as
  those generated in the Potts model, this can be handled in a similar
  way to item \ref{it:deg}. We choose a large $N$ which will be a
  bound on the size of the denominators. At every step the weights of
  all the vectors in the survey are rounded down to the nearest
  allowed rational number, one with denominator $N$, and the remaining
  weight is distributed among the basis vectors. The resulting
  distribution is a survey of the original one. By the symmetry, in
  the resulting distribution all the basis vectors have the same
  weight, therefore their denominator  is at most $|D_v|N$. 
\end{enumerate}

\section{Application to the Potts model on small-degree trees}

Let $T$ be a random infinite tree rooted at the vertex $\rho$ such that 
the number of children $d$ of every vertex is distributed according to
a random variable  
$\mb{d}$ with expected value $\overline{d}$ and maximum possible value
$d_{max}$.  
Let the domain of values that can be assigned to each vertex be denoted by 
$D = \{1, \cdots,q \}$, and we will also call these values {\it colors}. The
channel $M$ on each edge in the Potts model is given by 
\begin{eqnarray*}
M_{i,j} = 
\begin{cases}
1-p & \text{if } i = j,\\
\frac{p}{q-1} & \text{otherwise},\\
\end{cases}
\end{eqnarray*}
where $0 \leq p \leq 1$. 
This channel corresponds to the $q$-state Potts model on the
tree. Denote the resulting configurations of the tree by
$\overline{\sigma}$ and the alphabet at a vertex $v$ by $\sigma_v$. The
Potts model weighs the resulting configurations according to the
Hamiltonian function 
$H(\overline{\sigma}) = 
\sum_{(u,v) \in E(T)} 
\Ind[\sigma_u = \sigma_v]$
%\sigma_u = \sigma_v}$ 
which counts the number of edges in which the color on both end points is
the same. On a finite tree, the probability distribution is given by
\begin{eqnarray*}
\Prob(\overline{\sigma}) =  \frac{1}{Z} \exp\left( \beta \sum_{(u,v) \in E(T)}
%  {\mathbf 1}_{\sigma_u = \sigma_v}\right)  
\Ind[\sigma_u = \sigma_v]\right)  
\end{eqnarray*}
where $Z$ is a normalizing constant and $\beta$ is an inverse
temperature parameter of the Potts model.

The second largest eigenvalue of the matrix
$M$ is denoted by 
\begin{eqnarray*}
\lambda = 1-\frac{pq}{q-1} = \frac{e^{\beta}-1}{e^\beta +q-1}.
\end{eqnarray*}
In line with the terminology for the Potts model,
$\lambda<0$ corresponds to the {\em ferromagnetic regime} while
$\lambda>0$ corresponds to the anti-ferromagnetic regime. The special
case of proper colorings corresponds to $\lambda = 
\frac{-1}{q-1}$. 

The {\em branching number} of an infinite tree is the supremum of the
real numbers $\gamma \geq 1$ such that $T$ admits a positive flow from
the root to infinity, where on every edge $e$, the flow is bounded by
$\gamma^{-\ell_e}$, where $\ell_e$ denotes the number of edges, including
$e$ on the path from $e$ to the root. Note that for the $d$-ary tree,
the branching number is $d$. In the case where each vertex of the tree
has $k$ children with probability $p_k$, it is known that if the
expected number of children $m=\sum_k kp_k >1$, then the branching
number is $m$ almost surely (see \cite{Lyo90}).

The Kesten-Stigum bound \cite{KS66} for the reconstruction problem says
that for a tree with branching number $d$ such that $\lambda d^2>1$
reconstruction holds. For the Potts model,
Mossel and Peres \cite{MP03} have shown that non-reconstruction holds if 
\begin{eqnarray*}
d \frac{q\lambda^2}{2+\lambda(q-2)} < 1
\end{eqnarray*}
and this bound was improved in \cite{MSW07}. 

At $q=3$, for large enough degree, the
Kesten-Stigum bound was recently shown to be sharp in both the
ferromagnetic and antiferromagnetic cases
\cite{Sly09}. However, there is still a gap for small degrees. We
consider the anti-ferromagnetic and the ferromagnetic Potts models
$q=3$, on $2$- and $3$-ary trees, and on the 
tree in which every vertex is chosen with equal probability to be 2 or
3, and show bounds on the threshold for non-reconstruction.

Let $\sigma$ denote a random configuration of a tree $T$ given by the
transition matrix. Recall that for $a \in D$, $L^a(n)$ denotes the random 
coloring of $L(n)$ conditioned on $\sigma_\rho  =a$. In agreement with the 
notation of \cite{Sly09} we define the random variable
\begin{eqnarray*}
X^+(n) = \Prob_{L \sim L^a(n)}(\sigma_\rho = a \ |  L).
\end{eqnarray*}

This is the conditional probabilities of the color $a$ at the root
when the coloring of the vertices at level $n$ 
is chosen conditioned on $\sigma_\rho = a$. Note
that by the symmetry of the channel, the distribution of $X^+(n)$ does not depend on the particular $a \in D$.

Let $Y^+(n) : = X^+(n)  - \frac{1}{q}$ and denote 
\begin{eqnarray*}
x_n = \Ex [Y^+(n)],  \ \ \ and  \ \ \ \
z_n = \Ex [Y^+(n)^2].
\end{eqnarray*}

Here, the expectation is taken over the randomness of the tree and the 
Markov process on the tree (the random coloring). We go back to the unconditional distribution using the following identity of Sly \cite{Sly09}:

\begin{claim}[\cite{Sly09}]The following relations hold:
$$x_n = \Ex \left[\displaystyle\sum_{a=1}^q \left(\Prob(\sigma_\rho = a |
  L(n)) - \frac{1}{q} \right)^2 \right]\geq z_n \geq 0$$ 
\end{claim}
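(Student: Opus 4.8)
The plan is to pass between the conditional quantity $X^{+}(n)$ and the unconditional posterior probabilities $\theta_a(L):=\Prob_T(\sigma_\rho=a\mid L(n)=L)$ via a change-of-measure (tilting) identity, and then exploit the invariance of the Potts channel under permutations of the $q$ colors; throughout, $\Ex[\cdot]$ is over both the random tree and the random coloring. Since the Potts channel is symmetric the root marginal is uniform, $\Prob_T(\sigma_\rho=a)=1/q$, so for a fixed tree $T$ and any function $g$ of the level-$n$ boundary,
$$\Ex_{L\sim L^a(n)}[g(L)]=\sum_{L}\frac{\theta_a(L)\,\Prob_T(L(n)=L)}{1/q}\,g(L)=q\,\Ex_{L(n)\sim\Prob_T}\big[\theta_a(L)\,g(L)\big].$$
Taking $g\equiv\theta_a$ gives $\Ex[X^{+}(n)]=\Ex\big[q\,\Ex_{L(n)}[\theta_a(L)^2]\big]$, and since relabeling the spins by any permutation $\pi$ leaves $\Prob_{T,\Psi}$ invariant (as $M_{\pi(i),\pi(j)}=M_{i,j}$ and the degree distribution is color-blind), $\Ex\,\Ex_{L(n)}[\theta_a(L)^2]$ does not depend on $a$ and hence equals $\tfrac1q\,\Ex\,\Ex_{L(n)}\big[\sum_{a'}\theta_{a'}(L)^2\big]$. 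Using $\sum_{a'}\theta_{a'}(L)=1$, this yields
$$x_n=\Ex[X^{+}(n)]-\tfrac1q=\Ex\,\Ex_{L(n)}\Big[\sum_{a'}\theta_{a'}(L)^2-\tfrac1q\Big]=\Ex\Big[\sum_{a=1}^q\big(\Prob(\sigma_\rho=a\mid L(n))-\tfrac1q\big)^2\Big],$$
which is the claimed identity.

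Next I would treat $z_n$ the same way. Writing $Y^{+}(n)=\theta_a(L)-1/q$ with $L\sim L^a(n)$ and applying the tilting identity with $g(L)=(\theta_a(L)-1/q)^2$, together with color-symmetry,
$$z_n=\Ex\,\Ex_{L\sim L^a(n)}\big[(\theta_a(L)-\tfrac1q)^2\big]=\Ex\,\Ex_{L(n)}\big[q\,\theta_a(L)(\theta_a(L)-\tfrac1q)^2\big]=\Ex\,\Ex_{L(n)}\Big[\sum_{a'}\theta_{a'}(L)\big(\theta_{a'}(L)-\tfrac1q\big)^2\Big].$$
Subtracting the two formulas,
$$x_n-z_n=\Ex\,\Ex_{L(n)}\Big[\sum_{a'}\big(1-\theta_{a'}(L)\big)\big(\theta_{a'}(L)-\tfrac1q\big)^2\Big]\ge0,$$
because $0\le\theta_{a'}(L)\le1$ makes every summand nonnegative; and $z_n\ge0$ is immediate (it is the expectation of a square, and is also visibly nonnegative from the last display).

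The two Bayes/tilting computations and the final pointwise bound are routine; the one step that needs care is the color-symmetry argument, where one must check that the joint law of the random tree and its coloring is invariant under a simultaneous permutation of all spin values, so that $\Ex\,\Ex_{L(n)}[h(\theta_a(L))]$ is the same for every color $a$ and every $h$. Granting that, both the identity for $x_n$ and the inequality $x_n\ge z_n$ reduce to the elementary facts $\sum_{a'}\theta_{a'}(L)=1$ and $0\le\theta_{a'}(L)\le1$ for the posterior vector. (This is the argument of Sly \cite{Sly09}.)
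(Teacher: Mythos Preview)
The paper does not supply its own proof of this claim; it merely attributes it to Sly \cite{Sly09}. Your argument is correct and is precisely the standard one: the Bayes/tilting identity $\Ex_{L\sim L^a(n)}[g(L)]=q\,\Ex_{L(n)}[\theta_a(L)\,g(L)]$ (valid for each fixed tree since the root marginal is uniform) together with the color-permutation symmetry of the Potts channel converts both $x_n$ and $z_n$ into unconditional expectations of functionals of the posterior vector $(\theta_{a'}(L))_{a'}$, after which the identity for $x_n$ is the algebraic fact $\sum_{a'}\theta_{a'}^2-\tfrac1q=\sum_{a'}(\theta_{a'}-\tfrac1q)^2$ and the inequality $x_n\ge z_n$ is the pointwise bound $\sum_{a'}(1-\theta_{a'})(\theta_{a'}-\tfrac1q)^2\ge0$. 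There is nothing further to compare against in the present paper.
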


Therefore, clearly, the condition 
\begin{eqnarray*}
\lim_{n \rightarrow \infty}x_n = 0 
\end{eqnarray*}
is equivalent to non-reconstruction and further, if for each $a$,
\begin{eqnarray*} 
\Ex \left[\left(\Prob(\sigma_\rho = a | L(n)) - \frac{1}{q} \right)^2\right]  \leq
\varepsilon,
\end{eqnarray*}
then $x_n \leq q\varepsilon$. 

By expanding the recursion for the expectation $\Ex [Y^+(n)]$ using a Taylor expansion, exactly as in \cite{Sly09}, we obtain a bound on 
$x_{n+1}$ in terms of $x_n$ and $z_n$. Extending the analysis
of \cite{Sly09} to random trees is straightforward. First, we
obtain a relation in terms of a fixed degree $d$ and then take the
expectation over the distribution for the degree $\mb{d}$.

\begin{theorem}
\label{thm:xn}
Let $q,\mb{d},\overline{d}, d_{max}, \lambda, x_n$ and $z_n$ be defined as above. Then,
\begin{eqnarray}
\label{eq:recursive-inequality}
x_{n+1} \leq 
\overline{d} \lambda^2 x_n  & +  &
\displaystyle\sum_{j=2}^{d_{max}} E \left[{\mb{d} \choose j} \right]  \lambda^{2j} 
\left[
\frac{2(q-1)}{q^2}\left(\frac{q(q-3+\lambda)}{q-1} x_n   -
 \frac{q^2\lambda}{q-1} z_n \right)^j  \right. \\ 
& + & \left. 
\frac{(q-1)(q-2)}{q^2} \left(   - 
    \frac{q(3q-6+2\lambda)}{(q-1)(q-2)}x_n +
  \frac{2q^2\lambda}{(q-1)(q-2)} z_n   \right)^j \right. \\
& - & \left.
\frac{2(q-1)}{q} \left( - \frac{q}{q-1}x_n  \right)^j 
\right].
\end{eqnarray}
\end{theorem}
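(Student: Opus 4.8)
The plan is to follow Sly's argument from \cite{Sly09} for a fixed degree $d$, obtaining a recursive inequality $x_{n+1} \le \text{(something in terms of } x_n, z_n, d)$, and then take expectation over the random degree $\mb{d}$. First I would set up the belief-propagation recursion for $X^+(n+1)$ in terms of the $d$ independent copies $X^+_1(n), \dots, X^+_d(n)$ coming from the subtrees, writing each as $\frac1q + Y^+_i(n)$. Passing through the channel $M$ with second eigenvalue $\lambda$, each child contributes a perturbed probability vector whose deviation from uniform is scaled by $\lambda$; the root's conditional distribution is then the normalized coordinatewise product of these $d$ vectors. Expanding $X^+(n+1) = \Prob(\sigma_\rho = a \mid L(n+1))$ as a ratio of such products and Taylor-expanding in the small quantities $\lambda Y^+_i(n)$ around the uniform point, I collect terms by total degree $j$ in the $Y_i$'s.

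The linear ($j=1$) term, after taking expectations and using independence of the subtrees, produces the Kesten--Stigum-type contribution $d\lambda^2 x_n$ (each of the $d$ children contributes $\lambda^2 \Ex[Y^+(n)] = \lambda^2 x_n$ once cross terms vanish by independence and zero mean of the appropriate combinations). For $j \ge 2$, the key point is that after taking expectations the only surviving contributions from a fixed set of $j$ distinct children are products of $\Ex[(Y^+)^{\le 2}]$ over those children — but since $x_n$ and $z_n$ are the only first- and second-moment quantities, and higher moments are dominated (by the trivial bound $|Y^+| \le 1 - 1/q$ and by $\Ex[(Y^+)^k] \le \Ex[(Y^+)^2] = z_n$ for $k \ge 2$, or the reverse bound $z_n \le x_n$ from the Claim), Sly's bookkeeping shows each $j$-subset contributes at most a fixed polynomial expression in $x_n, z_n$ with the explicit coefficients $\frac{2(q-1)}{q^2}(\cdots)^j + \frac{(q-1)(q-2)}{q^2}(\cdots)^j - \frac{2(q-1)}{q}(\cdots)^j$ appearing in the statement. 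The number of such $j$-subsets is $\binom{d}{j}$, giving the per-degree bound $x_{n+1} \le d\lambda^2 x_n + \sum_{j=2}^{d} \binom{d}{j}\lambda^{2j}[\cdots]$.

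To finish, I take the expectation over the degree distribution $\mb{d}$: non-reconstruction quantities at level $n+1$ average the per-degree bounds against $\Prob(\mb{d} = d)$, and by linearity $\Ex[\mb{d}]\lambda^2 x_n = \overline d \lambda^2 x_n$ handles the linear term while $\sum_d \Prob(\mb{d}=d)\binom{d}{j} = \Ex[\binom{\mb{d}}{j}]$ handles each higher-order term, with the sum now running up to $d_{max}$. This yields exactly \eqref{eq:recursive-inequality}. The main obstacle is the $j \ge 2$ bookkeeping: one must verify that Sly's explicit coefficient expressions are genuine \emph{upper} bounds uniformly in $n$ — this requires checking that the Taylor remainder and all the cross-moment terms are correctly absorbed, using both $z_n \le x_n$ and $z_n \ge 0$ in the right places so that the (possibly negative) bases raised to the $j$-th power are handled correctly for both even and odd $j$. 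Since the excerpt explicitly says this is done "exactly as in \cite{Sly09}," I would cite that computation for the single-degree case and only spell out the (straightforward) averaging over $\mb{d}$ in detail.
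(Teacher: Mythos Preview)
Your outline matches the paper's proof: fix the degree $d$, invoke Sly's computation from \cite{Sly09} to obtain the single-degree bound, binomially expand, and then average over the degree distribution via linearity of expectation to replace $d$ by $\overline{d}$ and $\binom{d}{j}$ by $\Ex\!\left[\binom{\mb{d}}{j}\right]$. The paper in fact records the intermediate closed form
\[
\Ex[Y^+(n+1)\mid \mb{d}=d]\;\le\;\tfrac{2(q-1)}{q^2}(1+A)^d+\tfrac{(q-1)(q-2)}{q^2}(1+B)^d-\tfrac{2(q-1)}{q}(1+C)^d+1-\tfrac1q,
\]
with $A,B,C$ the $\lambda^2$-scaled linear combinations of $x_n,z_n$ appearing in the statement, and then expands binomially; this is exactly your ``collect terms by total degree $j$'' step.

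One correction: the constraints $0\le z_n\le x_n$ play no role in the proof of this theorem. The inequality \eqref{eq:recursive-inequality} retains both $x_n$ and $z_n$ as free parameters; the upper bound in Sly's argument comes from bounding the normalizing constant in the belief-propagation ratio, not from controlling higher moments of $Y^+$ or absorbing a Taylor remainder via $z_n\le x_n$. Those constraints are used only afterwards, in Corollary~\ref{cor:bounds}, where each $j$-term of \eqref{eq:recursive-inequality} is optimized over $z_n\in[0,x_n]$ to obtain inequalities in $x_n$ alone. So you should move that part of your discussion out of the theorem's proof and into the corollary.
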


\begin{proof} The degree of the root of the random tree of depth
$n+1$ is $\mb{d}$. We first bound the expectation conditional on $\mb{d}=d$. 
Following the calculations of \cite{Sly09}, we obtain
\begin{eqnarray*}
%\label{eq:random-recursion}
\Ex [Y^+(n+1)|\mb{d}=d] &\leq& 
\frac{2(q-1)}{q^2}  
\left(1 + \frac{\lambda^2 q (q-3 +\lambda)}{q-1} x_n - \frac{q^2\lambda^3}{q-1} z_n 
\right)^d \\
&& +
\frac{(q-1)(q-2)}{q^2} 
\left(1-\frac{\lambda^2 q (3q-6+2\lambda)}{(q-1)(q-2)} x_n + \frac{2 q^2\lambda^3}
{(q-1)(q-2)} z_n\right)^d \\
&& -
\frac{2(q-1)}{q} 
\left(1-\frac{\lambda^2 q}{q-1} x_n \right) ^d +1-1/q\\
& = & 
d \lambda^2 x_n  
%\\ \nonumber&& 
+ \displaystyle\sum_{j=2}^{d} {d \choose j}   \lambda^{2j} 
\left[
\frac{2(q-1)}{q^2}\left(\frac{q(q-3+\lambda)}{q-1} x_n  -
    \frac{q^2\lambda}{q-1} z_n \right)^j  \right. \\ 
\nonumber & & ~~~~~~~~~~~~~~~~~~~~+\left. 
\frac{(q-1)(q-2)}{q^2} \left(   - 
    \frac{q(3q-6+2\lambda)}{(q-1)(q-2)}x_n +
  \frac{2q^2\lambda}{(q-1)(q-2)} z_n   \right)^j \right.\\
\nonumber & & ~~~~~~~~~~~~~~~~~~~~-  \left.
\frac{2(q-1)}{q} \left( - \frac{q}{q-1}x_n \right)^j 
\right].
\end{eqnarray*}

Taking the expectation over the degree we obtain the statement of the theorem.

\end{proof}

We obtain 
the following Corollary by applying Theorem \ref{thm:xn} to particular values of $q$ and degree
distributions $\mb{d}$.  The inequalities are obtained by
optimizing each term in the summation \eqref{eq:recursive-inequality}
separately subject to the constraint that $0 \leq z_n \leq x_n$.

\begin{corollary}
\label{cor:bounds} Let $q=3$. In the ferromagnetic regime,
\begin{enumerate}[1.]
\item If $\mb{d}=2$ with probability 1, and $\lambda>0$ 
\begin{eqnarray*}
x_{n+1} \leq 2\lambda^2 x_n +
  \frac{3}{2} \lambda^4 x_n^2 (2\lambda^2 +4 \lambda + 1)
\end{eqnarray*}
\item If $\mb{d}=3$ with probability 1, and $\lambda>0$ 
\begin{eqnarray*}
x_{n+1} \leq 3\lambda^2 x_n + \frac{9}{2}
  \lambda^4 x_n^2 (2\lambda^2 +4 \lambda + 1) - \frac{9}{2}\lambda^6
  x_n^3 (1+\lambda)^3
\end{eqnarray*}
\item If $\mb{d}=2$ with probability 1/2, $\mb{d}=3$ with probability
  1/2, and $\lambda >0$ 
\begin{eqnarray*}
x_{n+1} \leq \frac{5}{2} \lambda^2 x_n + 3 \lambda^4 x_n^2 (2\lambda^2
+ 4\lambda + 1) - \frac{9}{4} \lambda^6 x_n^3 (1+\lambda)^3
\end{eqnarray*}
\end{enumerate}
In the antiferromagnetic regime,
\begin{enumerate}[4.]
\item If $\mb{d}=3$ with probability 1, and $\lambda = -\frac{1}{2}$ 
\begin{eqnarray*}
x_{n+1} \leq \frac{3}{4} x_n + \frac{63}{32} x_n^2 - \frac{351}{256}x_n^3
\end{eqnarray*}
\end{enumerate}
\end{corollary}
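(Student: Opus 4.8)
The plan is to obtain each of the four inequalities by specializing the general recursion \eqref{eq:recursive-inequality} of Theorem~\ref{thm:xn} and then eliminating the auxiliary variable $z_n$ using the only information about it that is available, namely the bound $0\le z_n\le x_n$ supplied by the Claim. First I would set $q=3$ and evaluate the three moments appearing in \eqref{eq:recursive-inequality}, $\overline d$, $\Ex[\binom{\mb d}{2}]$ and $\Ex[\binom{\mb d}{3}]$; these are $(2,1,0)$, $(3,3,1)$ and $(\tfrac52,2,\tfrac12)$ for the three degree distributions, and in the antiferromagnetic case one also substitutes $\lambda=-\tfrac12$. After this, the right-hand side of \eqref{eq:recursive-inequality} is $\overline d\lambda^2x_n$ plus, for $j\in\{2,3\}$ (only $j=2$ when $d_{max}=2$), the term $\Ex[\binom{\mb d}{j}]\lambda^{2j}\,B_j(x_n,z_n)$, where with $q=3$
$$B_j(x_n,z_n)=\tfrac49\Big(\tfrac{3\lambda}{2}x_n-\tfrac{9\lambda}{2}z_n\Big)^{\!j}+\tfrac29\Big(-\tfrac{3(3+2\lambda)}{2}x_n+9\lambda z_n\Big)^{\!j}-\tfrac43\Big(-\tfrac32x_n\Big)^{\!j}$$
is homogeneous of degree $j$ in $(x_n,z_n)$.

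Next I would write $z_n=tx_n$ with $t\in[0,1]$, so that each $B_j$ becomes $x_n^{\,j}b_j(t)$ for an explicit degree-$j$ polynomial $b_j$, and the bound reads $x_{n+1}\le R(t):=\overline d\lambda^2x_n+\sum_jc_jx_n^{\,j}b_j(t)$ with $c_j\ge0$. The quadratic $b_2$ is convex (leading coefficient $27\lambda^2\ge0$, or $\tfrac{27}{4}$ after setting $\lambda=-\tfrac12$), so on $[0,1]$ it is maximized at an endpoint; a one-line comparison of $b_2(0)$ and $b_2(1)$ shows the maximizer is $t=0$ for $\lambda>0$ and $t=1$ for $\lambda<0$, which produces the $x_n^2$-coefficients $\tfrac32(2\lambda^2+4\lambda+1)$, resp.\ $\tfrac{21}{2}$. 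For the cubic (present only when $3\in\supp\mb d$) I would differentiate: $b_3'$ comes from a difference of two squares and factors as $b_3'(t)\propto\lambda\,(3+\lambda-3\lambda t)(3+3\lambda-9\lambda t)$, which makes the sign of $R'(t)=c_2x_n^2b_2'(t)+c_3x_n^3b_3'(t)$ explicit. Using the universal bound $x_n\le\tfrac{q-1}{q}=\tfrac23$ (valid since $\sum_a(\Prob(\sigma_\rho=a\mid L)-1/q)^2\le\tfrac{q-1}{q}$ pointwise, cf.\ the Claim), the $O(x_n^2)$ term dominates the $O(x_n^3)$ term throughout $[0,1]$, so $R$ attains its maximum on $[0,1]$ at the same endpoint as $b_2$ does — $t=0$ when $\lambda>0$, $t=1$ when $\lambda=-\tfrac12$. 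Evaluating $b_3$ there gives the $x_n^3$-coefficient: in the antiferromagnetic case it is exactly $b_3(1)=-\tfrac{351}{4}$ times $\lambda^6=\tfrac1{64}$, and in the three ferromagnetic cases one further relaxes the (negative, unwieldy) value $b_3(0)=\tfrac32\lambda^3-\tfrac34(3+2\lambda)^3+\tfrac92$ upward to the clean $-\tfrac92(1+\lambda)^3$, which is legitimate because the difference $-\tfrac{27}{2}\lambda^2-27\lambda-\tfrac{45}{4}$ is $\le0$ for $\lambda\ge0$. Collecting terms finishes each case.

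The two expansions (the moments, and $b_j(t)$) together with the final algebraic collection are routine bookkeeping. The step I expect to be the actual work is the monotonicity claim — that it is valid to substitute one common value of $z_n$, the endpoint extremal for the dominant quadratic term, into \emph{every} term of \eqref{eq:recursive-inequality}, rather than maximizing each term over its own worst case; term-by-term maximization would instead leave a positive $x_n^3$ contribution once $\lambda$ is near $\tfrac12$, and the clean closed forms would no longer hold. Making this precise needs the a-priori bound $x_n\le(q-1)/q$ together with a uniform lower bound over $t\in[0,1]$ on the ratio $|b_2'(t)|/|b_3'(t)|$ (which the factorization of $b_3'$ supplies), and it is this estimate that effectively restricts the clean inequalities to the relevant range of $\lambda$; outside that range one would either retain $b_3(0)$ itself or compare $R(0)$ with $R(1)$ directly.
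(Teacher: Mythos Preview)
Your approach is correct and, notably, more careful than the paper's own one-sentence argument.  The paper simply says the inequalities are obtained ``by optimizing each term in the summation \eqref{eq:recursive-inequality} separately subject to the constraint $0\le z_n\le x_n$''.  But as you observe, genuine term-by-term maximization does \emph{not} reproduce the stated cubic coefficients.  In the antiferromagnetic case $\lambda=-\tfrac12$, the function $b_3$ is strictly decreasing on $[0,1]$, so its separate maximum is $b_3(0)=-\tfrac{27}{16}$, whereas the corollary records $b_3(1)=-\tfrac{351}{4}$; in the ferromagnetic $d=3$ case, $\max_{t\in[0,1]}b_3(t)$ is positive for $\lambda\gtrsim 0.4$ (indeed equals $9/4$ once $\lambda\ge\tfrac12$), whereas the corollary has a negative cubic term.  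So the paper's description is at best loose; what actually yields the printed inequalities is exactly your joint-optimization argument: substitute the single extremal value of $z_n$ dictated by the dominant quadratic term ($z_n=0$ for $\lambda>0$, $z_n=x_n$ for $\lambda<0$) into every $j$, and then in the ferromagnetic case relax the resulting $b_3(0)$ upward to the cleaner $-\tfrac92(1+\lambda)^3$.

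Your monotonicity step (showing the joint maximum of $R(t)$ really sits at the endpoint) is the right way to justify this substitution, and your caveat is well placed: with only the crude a-priori bound $x_n\le\tfrac{q-1}{q}$, the estimate $R(0)\ge R(1)$ eventually fails for large $\lambda$ (around $\lambda\approx 1$), so the ferromagnetic inequalities as stated for all $\lambda>0$ are not fully supported by either argument.  For the applications in Theorem~\ref{thm:non-rec-threshold}, where $\lambda\le 0.69$ and the computed $x^*$ are small, your argument goes through comfortably, which is all that is actually used downstream.
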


Using the above bounds, we show that non-reconstruction holds in the following cases. 

\begin{theorem}\label{thm:non-rec-threshold}
Let $q=3$. In the ferromagnetic regime,
\begin{enumerate}[1.]
\item if $\mb{d}=2$ with probability 1, non-reconstruction holds for
  $\lambda \leq 0.69$;
\item if $\mb{d}=3$ with probability 1, reconstruction holds for
  $\lambda \leq 0.555$;
\item if $\mb{d}=2$ with probability 1/2, $\mb{d}=3$ with
  probability 1/2, non-reconstruction holds for $\lambda \leq 0.61$.
\end{enumerate}
In the antiferromagnetic regime,
\begin{enumerate}[4.]
\item If $\mb{d}=3$ with probability 1, and $\lambda = -\frac{1}{2}$ (the
  case of proper colorings), there is non-reconstruction.
\end{enumerate}
\end{theorem}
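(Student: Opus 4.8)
The plan is to combine the recursive inequalities of Corollary~\ref{cor:bounds} with the computational method of Sections \ref{sec:surveys} and the ``two-step'' strategy outlined in the introduction. Recall that by the Claim of Sly and the remarks following it, non-reconstruction is equivalent to $\lim_{n\to\infty} x_n = 0$, and that $0 \le z_n \le x_n$ holds for all $n$. The strategy for each of the four cases is: (i) use Corollary~\ref{cor:bounds} to show that once $x_n$ drops below a suitable explicit constant $\varepsilon_0 = \varepsilon_0(\lambda)$, it decreases geometrically to $0$ --- this is the ``second step'' and is easy because the leading coefficient $\overline{d}\lambda^2$ (or its analogue) is strictly less than $1$ at the claimed values of $\lambda$, so for small enough $x_n$ the right-hand side is at most $c\, x_n$ with $c<1$; (ii) use the survey algorithm to \emph{certify} that $x_N \le \varepsilon_0$ for some finite depth $N$ --- this is the ``first step,'' and is exactly what the machinery of Theorem~\ref{thm:main}, Theorem~\ref{thm:mix}, Theorem~\ref{thm:transitivity} and Theorem~\ref{thm:convex} is designed to do.

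In more detail, for step (ii) I would run the recursion of Theorem~\ref{thm:recon} on surveys: start from the trivial survey of the (degenerate) residual distribution at depth $1$, and at each level apply $f$ (here the belief-propagation update for the Potts channel with the given $\lambda$ and degree distribution $\mb{d}$), mixing over the degree choices via Theorem~\ref{thm:mix} and re-compressing to a bounded skeleton via Theorem~\ref{thm:transitivity}. Since total-variation-type distance from $\pi$ --- and in particular the map $\eta \mapsto \sum_a (\eta^a - 1/q)^2$ --- is convex, Theorem~\ref{thm:convex} guarantees that the expectation of this quantity under the survey is an \emph{upper bound} on its expectation under the true residual distribution, which by the displayed inequality following Sly's Claim bounds $x_N$ up to the factor $q$. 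To keep the certificate rigorous, the computation is carried out in rational arithmetic with a rounding step as described in remark~4 of Section~\ref{sec:surveys}, using the symmetry of the Potts model so that denominators stay controlled. One then checks numerically that the resulting rigorous bound on $x_N$ is below $\varepsilon_0(\lambda)$ for the stated $\lambda$.

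For step (i), I would analyze the scalar recursions in Corollary~\ref{cor:bounds} directly. In case~1, $x_{n+1} \le 2\lambda^2 x_n + \tfrac32\lambda^4(2\lambda^2+4\lambda+1)x_n^2$; with $\lambda \le 0.69$ one has $2\lambda^2 < 1$, so there is an interval $(0,\varepsilon_0]$ on which $2\lambda^2 + \tfrac32\lambda^4(2\lambda^2+4\lambda+1)\varepsilon_0 < 1$, and once $x_n \le \varepsilon_0$ the sequence is contracting and hence tends to $0$; the same reasoning handles cases~2 and~3 (dropping the negative cubic term only strengthens the bound, or keeping it if needed), and case~4 with the explicit coefficients $\tfrac34, \tfrac{63}{32}, -\tfrac{351}{256}$ at $\lambda=-\tfrac12$. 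One must also verify that the recursion does not overshoot: since each right-hand side is monotone increasing in $x_n$ on the relevant range and maps $[0,\varepsilon_0]$ into itself for suitable $\varepsilon_0$, the trajectory stays in the basin once it enters.

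The main obstacle is step (ii): one must choose the sequence of skeletons (their base sets and the number of iterations $N$) carefully enough that the rigorous survey-based upper bound on $x_N$ actually falls below the threshold $\varepsilon_0(\lambda)$ --- this is where the quality of the method is tested, and where, as noted in the remarks, the strategy of progressively refining the base sets matters. The contraction argument of step (i) is routine calculus with explicit constants; the equivalence with non-reconstruction is already provided by Sly's Claim; so the real content is the verified numerical computation that produces the certificate $x_N \le \varepsilon_0$, together with the (entirely mechanical) check that the claimed $\lambda$ values satisfy $\overline{d}\lambda^2 < 1$ so that a contraction basin exists at all. (Note that in case~2 the statement asserts \emph{reconstruction}, not non-reconstruction, for $\lambda \le 0.555$; that half would instead be established by exhibiting, via the same survey recursion run on the \emph{conditional} residual distribution or by a direct second-moment/recursion lower bound, that $x_n$ stays bounded away from $0$ --- but as the rest of the paper emphasizes the unconditional method, the reconstruction claim for the $3$-ary ferromagnet is presumably obtained from a separate lower-bound argument rather than from Corollary~\ref{cor:bounds}.)
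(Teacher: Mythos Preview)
Your proposal is correct and follows the same two-step strategy as the paper's own proof: run the survey algorithm to produce a rigorous numerical certificate $x^* \ge x_N$ for some finite $N$, then plug into the inequalities of Corollary~\ref{cor:bounds} to verify geometric contraction $x_{n+1} \le C x_n$ with $C<1$, which forces $x_n\to 0$ and hence non-reconstruction. The paper's proof is even terser than yours --- it simply records the four computed values of $x^*$ and asserts the substitution check --- so you have spelled out more of the mechanism (the role of Theorems~\ref{thm:main}--\ref{thm:convex}, the convexity of $\eta\mapsto\sum_a(\eta^a-1/q)^2$, the rational rounding) than the authors do.

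One clarification: the word ``reconstruction'' in item~2 of the statement is a typo for ``non-reconstruction'' (this is clear from the proof itself, which treats all four cases uniformly via the contraction argument, and from Table~1, which lists $0.555$ as a value ``up to which non-reconstruction has been shown''). Your final parenthetical, speculating that a separate lower-bound argument would be needed for case~2, is therefore unnecessary.
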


\begin{proof} 
Let $x^* = x^*(\lambda,q,d)$ denote the upper
bound on $x_n$ given by the algorithm when it is run with inputs
$\lambda, q, d$. The values obtained for $x^*$ by the algorithm were
as follows:

\begin{itemize}
\item If $\mb{d}=2$ w.p. 1, $\lambda = 0.69, \ x^* = 0.02939..$.
\item If $\mb{d}=3$ w.p. 1, $\lambda = 0.555, \ x^* = 0.04457..$.
\item If $\mb{d}=2$ or $3$ each w.p. $1/2$, $\lambda = 0.61, \ x^* =
  0.04057..$.
\item If $\mb{d}=3$ w. p. 1 and $\lambda = -1/2$, $x^* = 0.00038..$
\end{itemize}

The values $x^*$ are an upper bound on $x_n$. It can be verified by
substituting the $\lambda$ above into the corresponding inequalities
in  Corollary \ref{cor:bounds} that in each case $x_{n+1} < C x_n$
where $C$ is a constant smaller than 1. This implies
non-reconstruction since in the limit $x_n$ goes to 0.
\end{proof}

For each of these cases the algorithm was run using Maple 12 and with
only integer computations, using the rounding procedure described in
the previous section. The base sets of the skeletons were selected
manually, refining them whenever the resulting bounds stop
improving. The decomposition functions of the skeletons were selected
to minimize the expected total variation distance between the true
vector and its decomposition using the LP solver of Maple 12. Not more
than 100 iterations were needed for every case to obtain the required
bound. The implementation was run on a MacBook with a 2GHz Intel Core
2 Duo processor and 1GB of RAM. The limiting factor is that the last
tens of iterations take hours  because the skeleton size we choose
towards the end is close to 100 (in the case of d=2, 200). It is
reasonable to expect that with more computational power or time each
of these bounds can be improved, although going beyond the bounds of
Formentin and K\"ulske, if they are not tight, may require
significantly better resources.

\begin{table}[t]\label{table:comp}
\centering
\begin{tabular}{|c|c|c|c|c|}
\hline
$d$ & KS \cite{KS66} & MP \cite{MP03} & FK \cite{FK} & Theorem
\ref{thm:non-rec-threshold}  \\
\hline
 2 & 0.7071.. & 0.6666.. & 0.7018..& 0.69\\
\hline
  3 & 0.5773.. & 0.5302.. & 0.5731..& 0.555\\
\hline
 2.5 & 0.6324.. & 0.5873.. & 0.6278.. & 0.61  \\
\hline
\end{tabular}
\vspace{0.1in}
\caption{The Kesten-Stigum upper bound on the non-reconstruction
threshold and the values of $\lambda$ up to which
non-reconstruction has been shown in \cite{MP03}, \cite{FK} and here, 
for $q=3$ in the ferromagnetic regime ($\lambda>0$).}
\end{table}

The values of $\lambda$ we obtain for non-reconstruction  in the
ferromagnetic regime are  
shown in Table \ref{table:comp} for a comparison with previous bounds from
\cite{KS66,MP03,FK} (the bounds of \cite{MSW07} are not explicitly
derived, so we have not included them). The second column is the
Kesten-Stigum bound below which reconstruction is known to hold. 
In all cases we improve the bound of \cite{MP03}.
The anti-ferromagnetic case (3-coloring on the 3-ary tree) is also not
implied by the bounds of \cite{MP03}.

\section*{Acknowledgements} The authors would like to thank Allan Sly and Florent Krzakala for discussions on the topic. E.M. also thanks the Kavli Institute of Theoretical Physics China for a productive stay during the initial stages of this work. 

%\begin{thebibliography}
\bibliographystyle{siam}\bibliography{reconutf}

\begin{thebibliography}{10}

\bibitem{BVVW08}
{\sc N.~Bhatnagar, J.~Vera, E.~Vigoda, and D.~Weitz}, {\em Reconstruction for
  colourings on trees}, preprint, Available at {\tt
  http://arxiv.org/abs/0711.3664},  (2008).

\bibitem{BCMR06}
{\sc C.~Borgs, J.~Chayes, E.~Mossel, and S.~Roch}, {\em The {K}esten-{S}tigum
  reconstruction bound is tight for roughly symmetric binary channels}, in
  Proceedings of the 47th Annual IEEE Symposium on Foundations of Computer
  Science (FOCS), 2006, pp.~518--530.

\bibitem{EKPS00}
{\sc W.~Evans, C.~Kenyon, Y.~Peres, and L.~J. Schulman}, {\em Broadcasting on
  trees and the {I}sing model}, Ann. Appl. Probab., 10 (2000), pp.~410--433.

\bibitem{FK}
{\sc M.~Formentin and C.~K{\"u}lske}, {\em On the {P}urity of the free boundary
  condition {P}otts measure on random trees}.
\newblock ArXiv:0810.0677, 2008.

\bibitem{Geobook}
{\sc H.~Georgii}, {\em Gibbs Measures and Phase Transitions, Vol 9}, Studies in
  Mathematics, de Gruyter, 1988.

\bibitem{KS66}
{\sc H.~Kesten and B.~P. Stigum}, {\em Additional limit theorems for
  indecomposable multidimensional {G}alton-{W}atson processes}, Ann. Math.
  Statist., 37 (1966), pp.~1463--1481.

\bibitem{Lyo90}
{\sc R.~Lyons}, {\em Random walks and percolation on trees}, Ann. Probab., 18
  (1990), pp.~931--958.

\bibitem{MSW07}
{\sc F.~Martinelli, A.~Sinclair, and D.~Weitz}, {\em Fast mixing for
  independent sets, colorings, and other models on trees}, Random Structures
  and Algorithms, 31 (2007), pp.~134--172.
\newblock Submitted to {C}ommunication in {M}athematical {P}hysics. Extended
  abstract appeared in proceedings of 44'th FOCS.

\bibitem{MM06}
{\sc M.~M\'{e}zard and A.~Montanari}, {\em Reconstruction on trees and the spin
  glass transition}, J. Stat. Phys., 124 (2006), pp.~1317--1350.

\bibitem{MPZ02}
{\sc M.~M\'{e}zard, G.~Parisi, and R.~Zecchina}, {\em Analytic and algorithmic
  solution of random satisfiability problems}, Science, 297 (2002),
  pp.~812--815.
\newblock (Scienceexpress published on-line 27-June-2002;
  10.1126/science.1073287).

\bibitem{Mos01}
{\sc E.~Mossel}, {\em Reconstruction on trees: beating the second eigenvalue},
  Ann. Appl. Probab., 11 (2001), pp.~285--300.

\bibitem{MP03}
{\sc E.~Mossel and Y.~Peres}, {\em Information flow on trees}, Ann. Appl.
  Probab., 13 (2003), pp.~817--844.

\bibitem{RU01}
{\sc T.~Richardson and R.~Urbanke}, {\em The capacity of low-density parity
  check codes under message-passing decoding}, IEEE Trans. Info. Theory, 47
  (2001), pp.~599--618.

\bibitem{RUbook}
\leavevmode\vrule height 2pt depth -1.6pt width 23pt, {\em Modern coding
  theory}, Cambridge University Press, 2008.

\bibitem{Sly08}
{\sc A.~Sly}, {\em Reconstruction of random colourings}, Communications of
  Mathematical Physics,  (2008).

\bibitem{Sly09}
\leavevmode\vrule height 2pt depth -1.6pt width 23pt, {\em Reconstruction of
  symmetric {P}otts {M}odels}, in Proceedings of the 41st Annual ACM Symposium
  on Theory of Computing (STOC), 2009.

\bibitem{ZK07}
{\sc L.~Zdeborova and F.~Krzakala}, {\em Phase transitions in the coloring of
  random graphs}, Phys. Rev. E., 76 (2007), p.~031131.

\end{thebibliography}
%\end{thebibliography}

\end{document}